\newcommand{\tup}[1]{\langle#1\rangle}
\newcommand{\pair}[2]{\tup{#1,#2}}
\newcommand{\msf}{\mathsf}
\newcommand{\mbb}{\mathbb}
\newcommand{\mbf}{\mathbf}
\newcommand{\mbs}{\boldsymbol}
\newcommand{\mcl}{\mathcal}
\newcommand{\mrm}{\mathrm}
\newcommand{\nat}{\mbb{N}}
\newcommand{\rat}{\mbb{Q}}
\newcommand{\zz}{\mbb{Z}}
\newcommand{\myex}[2]{\exists #1.\,#2}
\newcommand{\emptyword}{\varepsilon}
\newcommand{\lt}{<}
\newcommand{\gt}{>}
\newcommand{\length}[1]{|#1|}
\newcommand{\fap}[2]{#1(#2)}
\newcommand{\fapn}[2]{\fap{#1^{#2}}}
\newcommand{\bfap}[3]{\fap{#1}{#2,#3}}
\newcommand{\where}{\mid}
\newcommand{\two}{\mbf{2}}
\newcommand{\str}[1]{#1^{\nat}}
\newcommand{\fstred}{\ge}
\newcommand{\fstredstrict}{>}
\newcommand{\fstconv}{\equiv}
\newcommand{\convclass}[1]{[#1]}
\newcommand{\botdeg}{\mbs{0}}
\newcommand{\seq}[1]{\langle#1\rangle}
\newcommand{\sshift}{\mcl{S}}
\newcommand{\shift}{\fapn{\sshift}}
\newcommand{\zloops}{\fap{\msf{Z}}}
\newcommand{\pto}{\rightharpoonup}
\newcommand{\wof}[2]{#1 \cdot #2}
\newcommand{\wprod}[2]{#1 \otimes #2}
\newcommand{\sdpf}{\varPhi}
\newcommand{\dpf}{\fap{\sdpf}}
\newcommand{\floor}[1]{\left\lfloor{#1}\right\rfloor}
\newcommand{\zip}{\msf{zip}}
\newcommand{\lfst}{$\text{FST}^{\star}$}
\newcommand{\scyc}{\varphi}
\newcommand{\scycp}{\varphi'}
\newcommand{\cyc}{\bfap{\scyc}}
\newcommand{\cycp}{\bfap{\scycp}}
\newcommand{\prefixof}{\sqsubseteq}
\newcommand{\sidifnn}{\mrm{id_{\ge 0}}}
\newcommand{\idifnn}{\fap{\sidifnn}}
\newcommand{\morse}{\msf{T}}
\newcommand{\perioddoubling}{\msf{P}}
\tikzset{state/.style={draw=black,ellipse,inner sep=.6mm,outer sep=.5mm}}
\tikzset{default/.style={->,>=stealth',shorten >=1pt,shorten <= 1pt,auto,node di
stance=2cm,semithick}}
\tikzstyle{roundNode}=[gyellow,thick,circle,minimum size=4mm,inner sep=0.5mm]
\tikzset{state/.style={draw=black,ellipse,inner sep=.6mm,outer sep=.5mm}}
\tikzset{default/.style={->,>=stealth',shorten >=1pt,shorten <= 1pt,auto,node di
stance=2cm,semithick}}
\tikzset{bl/.style={below left of=#1,yshift=3mm}}
\tikzset{br/.style={below right of=#1,yshift=3mm}}
\tikzset{lbr/.style={below right,inner sep=0.5mm}}
\tikzset{lar/.style={above right,inner sep=0.5mm}}
\tikzset{lbl/.style={below left,inner sep=0.5mm}}
\tikzset{lal/.style={above left,inner sep=0.5mm}}
\tikzset{tloop/.style={out=60,in=120,looseness=5}}
\tikzset{bloop/.style={out=-60,in=-120,looseness=5}}
\tikzset{lloop/.style={out=210,in=150,looseness=5}}
\tikzset{rloop/.style={out=-30,in=30,looseness=5}}
\tikzset{lhead/.style={at=(#1.west),anchor=east,xshift=0cm,inner sep=.5mm}}
\tikzset{rhead/.style={at=(#1.east),anchor=west,xshift=0cm,inner sep=.5mm}}
\tikzset{bhead/.style={at=(#1.south),anchor=north,xshift=0cm,inner sep=.5mm}}
\tikzstyle{gyellow}=[draw=brown!80,top color=yellow!50,bottom color=yellow!20!brown]
\tikzstyle{gblue}=[draw=blue!50,top color=white,bottom color=blue!60]
\tikzstyle{gred}=[draw=red!50,top color=white,bottom color=red!60]
\tikzstyle{ggreen}=[draw=green!70!black,top color=white,bottom color=green!80!black]
\tikzstyle{roundNode}=[gyellow,thick,circle,minimum size=4mm,inner sep=0.5mm]
\begin{document}

\title{The Degree of Squares is an Atom \\ 
  (Extended Version\thanks{This is an extended version of \cite{endr:grab:hend:zant:2015}.})}
\author{
  J\"{o}rg Endrullis \inst{1}
  \and
  Clemens Grabmayer \inst{1}
  \and
  Dimitri Hendriks \inst{1}
  \and
  Hans~Zantema \inst{2,3}
}

\institute{
  Department of Computer Science, VU University Amsterdam 
  \and
  Department of Computer Science, Eindhoven University of Technology 
  \and
  Institute for Computing and Information Science, Radboud University Nijmegen 
}

\date{}
\maketitle

\begin{abstract}
  We answer an open question in the theory of degrees of infinite sequences with respect to transducibility
  by finite-state transducers.
  An initial study of this partial order of degrees
  was carried out in~\cite{endr:hend:klop:2011},
  but many basic questions remain unanswered.
  One of the central questions concerns the existence of atom degrees,
  other than the degree of the `identity sequence'
  $1 0^0 1 0^1 1 0^2 1 0^3 \cdots$.
  A degree is called an `atom' if below it 
  there is only the bottom degree~$\botdeg$, which consists of the ultimately periodic sequences.
  We show that also the degree of the `squares sequence' $1 0^0 1 0^1 1 0^4 1 0^9 1 0^{16}\cdots$ is an atom.

  As the main tool for this result 
  we characterise 
  the transducts of `spiralling' sequences and their degrees.
  We use this to show that   
  every transduct of a `polynomial sequence' either is in~$\botdeg$
  or can be transduced back to a polynomial sequence 
  for a polynomial of the same order. 
\end{abstract}

\section{Introduction}\label{sec:intro}
Finite-state 
transducers are ubiquitous in computer science,
but little is known about the transducibility relation they induce on infinite sequences.
A finite-state transducer (FST) is a deterministic finite automaton 
which reads the input sequence letter by letter,
in each step producing an output word and changing its state.
An example of an FST is depicted in Figure~\ref{fig:fst:diff},
where we write `$a | w$' along the transitions to indicate that the input
letter is $a$ and the output word is $w$.
For example, it 
transduces the Thue-Morse sequence 
$\morse = 0110 1001 1001 0110 \cdots$ 
to the period doubling sequence $\perioddoubling = 101 1 101 0 101 1 101 1 \cdots$.

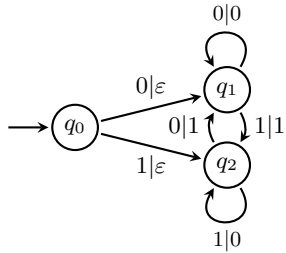
\begin{figure}
  \centering
  \begin{tikzpicture}[thick,>=stealth,node distance=20mm,
    node/.style={circle,fill=none,draw=black,inner sep=.4ex,minimum size=6mm,outer sep=.5mm}, l/.style={scale=.9}]
    \node (start) {};
    \node (q0) [node,right of=start,node distance=10mm] {$q_{0}$};
    \node (q1) [node,right of=q0,yshift=5mm] {$q_{1}$};
    \node (q2) [node,right of=q0,yshift=-5mm] {$q_{2}$};
    \draw [->] (start) -- (q0);
    \draw [->] (q0) -- (q1) node [midway, above] {$0|\emptyword$};
    \draw [->] (q0) -- (q2) node [midway, below] {$1|\emptyword$};
    \draw [->] (q1) to[bend left=30] node [midway, right] {$1 | 1$} (q2) ;
    \draw [->] (q2) to[bend left=30] node [midway, left] {$0 | 1$} (q1);
    \draw [->] (q1) to[out=60,in=120,looseness=5] node [l,above] {$0|0$} (q1);
    \draw [->] (q2) to[out=-60,in=-120,looseness=5] node [l,below] {$1|0$} (q2);
  \end{tikzpicture}
  \caption{A finite-state transducer realising the sum of consecutive bits modulo~$2$.}
  \label{fig:fst:diff}
\end{figure}

We are interested in transductions of infinite sequences. 
We say that a sequence $\sigma$ is \emph{transducible} to a sequence $\tau$, 
$\sigma \fstred \tau$, if there exists an FST that transforms $\sigma$ into $\tau$.
The relation $\fstred$ is a preorder on the set $\str{\Sigma}$ of infinite sequences, 
which
induces an equivalence relation~$\fstconv$ on~$\str{\Sigma}$,
and a partial order on the set of \emph{(FST) degrees}, that is, 
the equivalence classes with respect to~$\fstconv$.

So we have $\morse \fstred \perioddoubling$. 
Also the back transformation can be realised by an FST,
$\perioddoubling \fstred \morse$. 
Hence the sequences are equivalent, $\morse \fstconv \perioddoubling$,
and are in the same degree.

The bottom degree $\botdeg$ is formed by the ultimately periodic sequences,
that is, all sequences of the form~$u v v v \cdots$ for finite words~$u,v$ with $v$ non-empty.
Every infinite sequence can be transduced to any ultimately periodic sequence.

There is a clear analogy between degrees induced by transducibility 
and the recursion-theoretic degrees of unsolvability (Turing degrees).
Hence many of the problems settled for Turing degrees, 
predominantly in the 1940s, 50s and 60s, can be asked again for 
FST~degrees.

\begin{figure}
  \begin{center}
  \begin{minipage}{.2\textwidth}
  \begin{center}
    \begin{tikzpicture}[thick,>=stealth,
      node/.style={circle,fill=black,draw=none,ggreen,state},scale=.7]
    
      \node (evp) [node,inner sep=0,minimum size=4.75mm] at (0mm,0mm) {$\botdeg$};
      \node (a) [node,minimum size=4mm] at (-12mm,12mm) {};
      \node (b) [node,minimum size=4mm] at (12mm,12mm) {};
      \node (c) [node,minimum size=4mm] at (0mm,24mm) {};
    
      \begin{scope}[very thick,->]
      \draw (c) -- (a); \draw (a) -- (evp);
      \draw (c) -- (b); \draw (b) -- (evp);
      \end{scope}
    \end{tikzpicture}
  \end{center}
  \end{minipage}~%
  \begin{minipage}{.2\textwidth}
  \begin{center}
    \begin{tikzpicture}[thick,>=stealth,
      node/.style={circle,fill=black,draw=none,ggreen,state},scale=.7]
    
      \node (evp) [node,minimum size=4.57mm,inner sep=0] at (0mm,0mm) {$\botdeg$};
      \node (a) [node,minimum size=4mm] at (0mm,12mm) {};
      \node (b) [node,minimum size=4mm] at (0mm,24mm) {};
    
      \begin{scope}[very thick,->]
      \draw (b) -- (a); \draw (a) -- (evp);
      \end{scope}
    \end{tikzpicture}
  \end{center}
  \end{minipage}
  \caption{%
      Possible structures in the hierarchy 
      (no intermediate points on the arrows).
  }
  \end{center}
  \label{fig:diamond:line}
\end{figure}
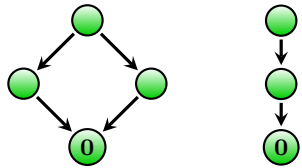

Some initial results on FST degrees have been obtained in~\cite{endr:hend:klop:2011}:
the partial order of degrees is not dense, not well-founded, there exist no maximal degrees,
and a set of degrees has an upper bound if and only if the set is countable.
The morphic degrees, and the computable degrees form 
subhierarchies.
Also shown in~\cite{endr:hend:klop:2011} is the existence of an `atom' degree,
namely the degree of~$1 0^0 1 0^1 1 0^2 1 0^3 \cdots$.%
\setcounter{footnote}{0}%
\footnote{%
  In~\cite{endr:hend:klop:2011} atom degrees were called `prime degrees'.
  We prefer the more general notion of `atom' because prime factorisation does not hold,
  see Theorem~\ref{thm:no:minimal:below}.
}
A degree $D \ne \botdeg$ is an atom if there exists no degree 
strictly between $D$ and the bottom degree $\botdeg$.
Thus every transduct of a sequence in an atom degree $D$ 
is either in $\botdeg$ or still in $D$.
%
The following questions, which have been answered for Turing degrees,
remain open for the FST degrees:
\begin{enumerate}

  \item
    How many atom degrees exist? 

  \item
    When do two degrees have a supremum (or infimum)? 
    In particular, are there pairs of degrees without a supremum (infimum)?

  \item
    Do the configurations displayed in Figure~\ref{fig:diamond:line} exist?

  \item 
    Can every finite distributive lattice  
    be embedded in the hierarchy? 



\end{enumerate}

At the British Colloquium for Theoretical Computer Science 2014, Jeffrey Shallit offered 
\textsterling 20 for each of the following questions, see~\cite{shal:2014}:
\begin{enumerate}[label=(\alph*)]
  \item
    Is the degree of the Thue-Morse sequence an atom?
  \item 
    Are there are atom degrees other than that of
    $1 0^0 1 0^1 1 0^2 1 0^3 \cdots$?
    \label{item:shallit:prime}
\end{enumerate}
We answer \ref{item:shallit:prime} by showing that 
the degree of the `squares' $1 0^0 1 0^1 1 0^4 1 0^9 \cdots$ is an atom.
The main tool that we use in the proof is 
a characterisation of transducts of `spiralling' sequences (Theorem~\ref{thm:transducts}),
which has the following consequence (Proposition~\ref{prop:poly-transducts}):
For all $k \ge 1$ it holds that
every transduct~$\sigma \not\in \botdeg$ of the sequence $\seq{n^k} = 1 0^{0^k} 1 0^{1^k} 1 0^{2^k} 1 0^{3^k} \cdots$ 
has, in its turn, a transduct of the form $\seq{p(n)} = 1 0^{p(0)} 1 0^{p(1)} 1 0^{p(2)} 1 0^{p(3)} \cdots$
where $p(n)$ is a  polynomial also of degree $k$.
This fact not only enables us to show that the degree of the squares sequence is an atom, by using it for $k=2$,
but it also suggests that the analogous result for the degree of $\seq{n^k}$, for arbitrary $k \ge 1$, has come within reach.
For this it would namely suffice to show, for polynomials $p(n)$ of degree~$k$, 
that $\seq{p(n)}$ can be transduced back to $\seq{n^k}$.

We obtain that there is a pair of non-zero degrees whose infimum is~$\botdeg$,
namely the pair of atom degrees of $\seq{n}$ and $\seq{n^2}$.
We moreover use Theorem~\ref{thm:transducts} 
to show that there is a non-atom degree that has no atom degree below it (Theorem~\ref{thm:no:minimal:below}).

\section{Preliminaries}\label{sec:prelims}
We use standard terminology and notation, see, e.g., \cite{saka:03}.
Let $\Sigma$ be an alphabet, 
i.e., a finite non-empty set of symbols.
We denote by $\Sigma^*$ the set of all finite words over~$\Sigma$,
and by $\emptyword$ the empty word. We let $\Sigma^+ = \Sigma^*\setminus\{\emptyword\}$.
The set of infinite sequences over~$\Sigma$ is 
$\str{\Sigma} = \{ \sigma \where \sigma : \nat \to \Sigma \}$
with $\nat = \{0,1,2,\ldots\}$, the set of natural numbers. 
For $u \in \Sigma^*$, we let $u^\omega = u u u \cdots$.
We define $\nat_{<k} = \{0,1,\ldots,k-1\}$.
We let $\Sigma^\infty = \Sigma^* \cup \str{\Sigma}$ denote the set of all words over $\Sigma$.
For $u \in \Sigma^*$ and $v \in \Sigma^\infty$ we write $u \prefixof v$ 
to denote that $u$ is a prefix of $v$, that is, when $v = uv'$ for some $v' \in \Sigma^\infty$.
For $f : \nat \to A$ and $k \in \nat$, the \emph{$k$-th shift} of~$f$
is defined by $\shift{k}{f}(n) = f(n+k)$, for all $n \in \nat$. 

We use the notation $\vec{a}$ to denote a tuple 
$\vec{a} = \tup{a_0,a_1,\ldots,a_{k-1}}$
where the~$a_i$ 
are elements from some set $A$;
the length of $\vec{a}$ is $k$.
Given a tuple~$\vec{a}$ we use $a_i$
for the element indexed~$i$, that is, we start indexing from $0$ onward.
We use the notation $\vec{a}'$ to denote the rotated tuple~%
$\vec{a}' = \tup{a_1,\ldots,a_{k-1}, a_0}$.

\section{Finite-State Transducers and Degrees}\label{sec:degrees}

For a thorough introduction to finite-state transducers, 
we refer to~\cite{saka:03}. 
Here we only consider \emph{complete} \emph{pure} \emph{sequential} finite-state transducers, 
where for every state~$q$ and input letter~$a$ there is precisely one successor state~$\delta(q,a)$,
and the functions realised by these transducers preserve prefixes.
We use $\two = \{0,1\}$ both for the input and the output alphabet.

\begin{definition}\label{def:fst}
  A \emph{finite-state transducer (FST)} is a tuple
  $T = \tup{Q,q_0,\delta,\lambda}$
  where $Q$ is a finite set of states, $q_{0} \in Q$ is the initial state,
  $\delta : Q \times \two \to Q$ is the transition function, 
  and $\lambda : Q \times \two \to \two^*$ is the output function.

  We homomorphically extend the transition function $\delta$ to $Q \times \two^* \to Q$
  and the output function~$\lambda$ to $Q \times \two^\infty \to \two^\infty$ as follows:
  \begin{align*}
    \delta(q,\emptyword) & = q 
    & \delta(q,au) & = \delta(\delta(q,a),u)
    && (q \in Q, a \in \two, u \in \two^*) \\
    \lambda(q,\emptyword) & = \emptyword 
    & \lambda(q,au) & = \lambda(q,a) \cdot \lambda(\delta(q,a),u)
    && (q \in Q, a \in \two, u \in \two^\infty)\,.
  \end{align*}
  The function $T : \two^\infty \to \two^\infty$ \emph{realised} by the FST~$T$
  is defined by $T(u) = \lambda(q_0,u)$,
  for all $u \in \two^\infty$.
\end{definition}

We note that finite state transducers and transduction of infinite sequences
can be implemented as infinitary rewrite systems~\cite{endr:grab:hend:2008,endr:hend:klop:2012,endr:hans:hend:polo:silv:2015}.

\begin{definition}
  Let $T = \tup{Q,q_0,\delta,\lambda}$ be an FST.
  A \emph{zero-loop} in $T$ 
  is a sequence of states $q_{1},\ldots,q_{n}$ with $n > 1$
  such that $q_1 = q_n$ and $q_{i} \ne q_{j}$ 
  for all $i,j$ with $1 \le i < j < n$ 
  and $q_{i+1} = \delta(q_{i},0)$ for all $1 \le i < n$.
  The \emph{length} of the zero-loop is $n-1$.
\end{definition}
Note that there can only be finitely many zero-loops in an FST.
\begin{definition}\label{def:zloops}
  Let $T = \tup{Q,q_0,\delta,\lambda}$ be an FST.
  We define $\zloops{T}$ as the least common multiple of the lengths of all zero-loops of $T$.
\end{definition}

Let $T$ be an FST with states $Q$.
From any state $q \in Q$, after reading the word $0^{|Q|}$, the automaton
must have entered a zero-loop
(by the pigeonhole principle there must be a state repetition).
By definition of $\zloops{T}$, the length $\ell$ of this loop divides $\zloops{T}$;
say $\zloops{T} = d \ell$ for some $d \ge 1$.
As a consequence, reading $0^{|Q|+i\cdot\zloops{T}}$ yields the output $\lambda(q,0^{|Q|})$ followed by 
$d i$ copies of the output of the zero-loop.
This yields a pumping lemma for FSTs, see also~\cite[Lemma~29]{endr:hend:klop:2011}.

\begin{lemma}\label{lem:pumping}
  Let $T = \tup{Q,q_0,\delta,\lambda}$ be an FST.
  For every $q \in Q$ and $n \ge |Q|$
  there exist words $p,c \in \two^*$ such that for all $i \in \nat$,
  $\delta(q,1 0^{n + i \cdot \zloops{T}}) = \delta(q,1 0^n)$ and
  $\lambda(q,1 0^{n + i \cdot \zloops{T}}) = p c^{i}$\,.
\end{lemma}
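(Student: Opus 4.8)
The plan is to prove Lemma~\ref{lem:pumping} by first analysing the behaviour of the automaton on inputs of the form $0^m$ alone, and then prepending the single letter~$1$. Fix $q \in Q$ and $n \ge |Q|$, and let $q' = \delta(q, 1 0^n)$. Reading further zeros from~$q'$, by the pigeonhole principle the sequence of states $q', \delta(q',0), \delta(q',0^2), \ldots$ must repeat within $|Q|$ steps; since we have already consumed $n \ge |Q|$ zeros before reaching~$q'$, the state $q'$ itself already lies on a zero-loop, say of length~$\ell$. (More carefully: among $\delta(q,10^n), \delta(q,10^{n-1}0), \ldots$ — i.e.\ along the last $|Q|+1$ zeros read — two states coincide, so $\delta(q,10^n)$ sits on a cycle reachable by zeros.) Thus $\delta(q', 0^\ell) = q'$, which immediately gives $\delta(q, 10^{n+j\ell}) = \delta(q, 10^n)$ for all $j \in \nat$.

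Next I would invoke the definition of $\zloops{T}$: since $\ell$ is the length of a zero-loop, $\ell \mid \zloops{T}$, so write $\zloops{T} = d\ell$ with $d \ge 1$. Then for every $i \in \nat$ we have $n + i\cdot\zloops{T} = n + (di)\cdot\ell$, and therefore $\delta(q, 10^{n+i\zloops{T}}) = \delta(q, 10^n)$, which is the first claimed equality. For the output, set $p = \lambda(q, 10^n)$ and let $c_0 = \lambda(q', 0^\ell)$ be the word emitted along one traversal of the zero-loop starting at~$q'$. Using the homomorphic extension of~$\lambda$ from Definition~\ref{def:fst} and the fact that $\delta(q', 0^{j\ell}) = q'$ for all~$j$, a routine induction on~$j$ shows $\lambda(q, 10^{n+j\ell}) = p\, c_0^{\,j}$. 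Specialising to $j = di$ yields $\lambda(q, 10^{n+i\zloops{T}}) = p\,(c_0^{\,d})^i$, so taking $c = c_0^{\,d}$ completes the proof.

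The only genuinely delicate point is the first step: justifying that $\delta(q, 10^n)$ already lies \emph{on} a zero-loop, rather than merely that some later state does. The hypothesis $n \ge |Q|$ is exactly what makes this work — after reading $n$ zeros (following the~$1$) the automaton has had at least $|Q|$ zero-transitions, forcing a repetition that traps $\delta(q,10^n)$ inside the cycle. One must phrase this so that the loop is entered \emph{at or before} the state $\delta(q,10^n)$; the cleanest formulation is to say that the states $\delta(q,10^{n})$ and $\delta(q,10^{n+\ell})$ coincide for the loop length~$\ell$ in question, which is what both equalities in the statement ultimately rest on. Everything after that is bookkeeping with the homomorphic extensions of~$\delta$ and~$\lambda$ and with divisibility of loop lengths by $\zloops{T}$.
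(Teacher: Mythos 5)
Your proposal is correct and follows essentially the same route as the paper's proof: pigeonhole along the zero-transitions to find a repetition whose period $\ell$ is a zero-loop length dividing $\zloops{T}$, then set $p = \lambda(q,10^n)$ and take $c$ to be the output of $\zloops{T}$ zeros from $\delta(q,10^n)$ (your $c = c_0^d$ equals the paper's $c = \lambda(q',0^{\zloops{T}})$). The only cosmetic difference is that the paper fixes minimal $k,\ell$ with $\delta(q,10^k)=\delta(q,10^{k+\ell})$ to certify that $\ell$ is literally a zero-loop length, whereas you argue the same point informally.
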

\begin{proof}
  Let $q\in Q$ and $n \ge |Q|$.
  Then there exist $k,\ell \in \nat$ such that 
  $k + \ell \le n$ and $\delta(q,10^{k}) = \delta(q,10^{k+\ell})$.
  Let $k,\ell$ be minimal with these properties.
  Then $\ell$ is the length of a zero-loop in $T$, 
  and so $\zloops{T} = d \ell$ for some $d \ge 1$.
  We have $\delta(q,10^{k'}) = \delta(q,10^{k'+\ell})$ for all $k' \ge k$,
  and so $\delta(q,10^n) = \delta(q,10^{n + i d \ell}) = \delta(q,10^{n + i \cdot \zloops{T}})$,
  for all $i \in \nat$.
  Let $q' = \delta(q,10^n)$,
  $p = \lambda(q,10^n)$ and $c = \lambda(q',0^{\zloops{T}})$.
  Then we have $\lambda(q,10^{n + i \cdot \zloops{T}}) = \lambda(q,10^n) \cdot \lambda(q',0^{i \cdot \zloops{T}}) = p c^i$,
  for all $i \in \nat$.
  \qed
\end{proof}

\begin{definition}\label{def:fstred}
  Let $T$ be an FST, and let $\sigma,\tau \in \str{\two}$ be infinite sequences.
  We say that $T$ \emph{transduces} $\sigma$ to $\tau$, 
  and that $\tau$ is the \emph{$T$-transduct} of $\sigma$,
  which we denote by $\sigma \fstred_{T} \tau$,
  whenever $T(\sigma) = \tau$.
  We write $\sigma \fstred \tau$, and call $\tau$ a \emph{transduct} of $\sigma$,
  if there exists an FST $T$ such that $\sigma \fstred_{T} \tau$.
\end{definition}

Clearly, the relation~$\fstred$ is reflexive.
By composition of FSTs (the so-called `wreath product'), 
the relation $\fstred$ is also transitive, see~\cite[Remark~9]{endr:hend:klop:2011}.
We write $\sigma \fstredstrict \tau$ when $\sigma \fstred \tau$ but not $\tau \fstred \sigma$.
Whenever we have $\sigma \fstred \tau$ as well as a back-transduction $\tau \fstred \sigma$,
we consider $\sigma$ and $\tau$ to be equivalent.

\begin{definition}\label{def:fstconv}
  We define the relation~${\fstconv} \subseteq \str{\two} \times \str{\two}$ by 
  ${\fstconv} = {\fstred} \cap {\fstred^{-1}}$.
  For a sequence~$\sigma \in \str{\two}$
  the equivalence class $\convclass{\sigma} = \{\tau \where \sigma \fstconv \tau\}$
  is called the \emph{degree} of~$\sigma$.

  A degree $\convclass{\sigma}$ is an \emph{atom} if
  $\convclass{\sigma} \ne \botdeg$ and there is no degree 
  $\convclass{\tau}$ 
  such that $\convclass{\sigma} \fstredstrict \convclass{\tau} \fstredstrict \botdeg$.
\end{definition}

\section{Characterising Transducts of Spiralling Sequences}\label{sec:tools}

In this section we characterize the transducts of `spiralling' sequences.
Proofs omitted in the text can be found in the appendix.

\begin{definition}\label{def:seq}
  For a function $f : \nat \to \nat$ we define the sequence~$\seq{f} \in \str{\two}$
  by 
  \begin{center}
  $\displaystyle
    \seq{f} = \prod_{i = 0}^{\infty} 1 0^{f(i)} = 1 0^{f(0)} \, 1 0^{f(1)} \, 1 0^{f(2)} \, \cdots\,.
  $
  \end{center}
  For a sequence $\seq{f}$, we often speak of the $n$-th \emph{block} of $\seq{f}$ 
  to refer to the occurrence of the word $10^{f(n)}$ in $\seq{f}$.
\end{definition}

In the sequel we often write 
$\seq{f(n)]}$ 
to denote the sequence~$\seq{n \mapsto f(n)}$.
%
We note that there is a one-to-one correspondence between functions $f : \nat \to \nat$,
and infinite sequences over the alphabet~$\two$ that start with the letter~$1$ and that
contain infinitely many occurrences of the letter~$1$.
Every degree is of the form $\convclass{\seq{f}}$
for some $f : \nat \to \nat$.

The following lemma is concerned with some basic operations on functions
that have no effect on the degree of $\seq{f}$
(multiplication with a constant, and $x$- and $y$-shifts),
and others by which we may go to a lower degree
(taking subsequences, merging blocks).

\begin{lemma}\label{lem:basic}
  Let $f : \nat \to \nat$, and $a,b \in \nat$.
  It holds that:
  \begin{enumerate}
    \item 
      \label{item:lem:basic:mult}
      $\seq{a f(n)} \fstconv \seq{f(n)}$, for $a \gt 0$,
    \item 
      $\seq{f(n + a)} \fstconv \seq{f(n)}$,
      \label{item:lem:basic:xshift}
    \item 
      $\seq{f(n) + a} \fstconv \seq{f(n)}$,
      \label{item:lem:basic:yshift}
    \item 
      $\seq{f(n)} \fstred \seq{f(an)}$, for $a \gt 0$,
      \label{item:lem:basic:sub}
    \item 
      $\seq{f(n)} \fstred \seq{a f(2n) + b f(2n+1)}$.
      \label{item:lem:basic:merge}
  \end{enumerate}
\end{lemma}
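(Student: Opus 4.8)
The plan is to prove each of the five items by explicitly constructing a finite-state transducer (or a composition of such) that realises the claimed transduction. Throughout, the key fact I would exploit is the one-to-one correspondence between functions $f : \nat \to \nat$ and sequences $\seq{f}$ consisting of $1$'s separated by blocks of $0$'s; an FST processing $\seq{f}$ can use the incoming $1$'s as delimiters to ``know'' when one block ends and the next begins, and in its output it can likewise emit $1$'s to delimit the blocks of the target sequence.

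First I would handle the ``same degree'' items \eqref{item:lem:basic:mult}--\eqref{item:lem:basic:yshift}, each requiring two FSTs (one in each direction). For \eqref{item:lem:basic:mult}, to go from $\seq{f(n)}$ to $\seq{a f(n)}$ the transducer outputs $a$ zeros for every input zero and copies the $1$'s; for the converse it outputs one zero for every $a$-th input zero within a block (counting modulo $a$, resetting at each $1$) — this works because the block lengths are exact multiples of $a$. For \eqref{item:lem:basic:xshift}, $\seq{f(n+a)}$ is obtained from $\seq{f(n)}$ by an FST that suppresses (does not output) the first $a+1$ occurrences of $1$ together with the zero-blocks between them, then copies verbatim; the reverse direction is not literally an FST on its own since it would need to ``invent'' the first $a$ blocks, but $\seq{f(n)}$ and $\seq{f(n+a)}$ differ only in a finite prefix, and prepending a fixed finite word is realisable by an FST, so $\seq{f(n+a)} \fstred \seq{f(n)}$ as well. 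For \eqref{item:lem:basic:yshift} the situation is symmetric to \eqref{item:lem:basic:mult}: to add $a$ to every block length, emit $a$ extra zeros right after each output $1$; to subtract $a$, delete the first $a$ zeros after each input $1$ (here one must note $f(n)+a > 0$ always, but going the other way one uses that the target is $\seq{f(n)}$ with $f(n) \ge 0$, and an input block of the form $10^{f(n)+a}$ has at least $a$ zeros, so deletion is always possible).

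Next, for \eqref{item:lem:basic:sub} I would build an FST that extracts the subsequence of blocks at indices $0, a, 2a, 3a, \dots$: it copies the $0$-th block, then for the next $a-1$ blocks it consumes the input (the $1$ and the following zeros) while emitting nothing, then copies one block, and so on, using a state counter modulo $a$ over the $1$-delimiters. This yields $\seq{f(n)} \fstred \seq{f(an)}$; no back-transduction is claimed. Finally, for \eqref{item:lem:basic:merge}, the FST reads the input blocks in pairs $(10^{f(2n)}, 10^{f(2n+1)})$: on the first block of a pair it emits a $1$ followed by $a$ zeros per input zero; on the second block of the pair it emits $b$ zeros per input zero and no further $1$; so the $n$-th output block has length $a f(2n) + b f(2n+1)$, i.e.\ the output is $\seq{a f(2n) + b f(2n+1)}$.

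I expect the only subtle point — the ``main obstacle'' — to be the direction $\seq{f(n+a)} \fstred \seq{f(n)}$ (and analogously the subtraction direction in \eqref{item:lem:basic:yshift}): a naive FST cannot fabricate the missing leading blocks, so one must invoke the general fact that prepending any fixed finite word is FST-realisable, or equivalently that sequences agreeing on a cofinite set are $\fstconv$-equivalent. Once that observation is in place, all five constructions are routine bookkeeping with finite-state counters over the $1$-delimiters, and the correctness proofs amount to checking that the $n$-th output block has the prescribed length, which follows directly from how many zeros the transducer emits between consecutive output $1$'s.
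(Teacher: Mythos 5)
Your constructions for items (i)--(iv) coincide with the paper's own proof: scale or unscale the zero-blocks with a counter modulo $a$ (resetting at each $1$), drop respectively prepend a fixed finite prefix for the index shift, trim respectively pad $0^a$ in every block, and skip $a-1$ out of every $a$ blocks. The one place you genuinely diverge is item (v): the paper simply defers to Theorem~\ref{thm:transducts} (in effect, to the weighted-product transducer of Lemma~\ref{lem:wprod:FST} instantiated with the single weight $\tup{a,b,0}$), whereas you give a direct construction that reads the input blocks in pairs and emits $a$ zeros per zero of the even-indexed block and $b$ zeros per zero of the odd-indexed block, with a single delimiting $1$ per pair. Your version is more elementary and self-contained, and it removes a forward reference to machinery developed only later in the section (harmless in the paper, since Lemma~\ref{lem:wprod:FST} does not depend on Lemma~\ref{lem:basic}, but aesthetically your order is cleaner); the paper's version saves only the effort of writing down one more automaton. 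One trivial slip in (ii): to turn $\seq{f(n)}$ into $\seq{f(n+a)}$ the transducer must suppress the first $a$ occurrences of $1$ together with the zeros following each of them, not the first $a+1$ occurrences --- as written, your automaton would also swallow the leading $1$ of the block $10^{f(a)}$, producing a sequence that starts with a $0$.
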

\begin{proof}
  Recall that, for any $g : \nat \to \nat$,
  with the $n$-th \emph{block} of $\seq{g}$ we mean the factor $10^{g(n)}$ of $\seq{g}$.
  \begin{enumerate}
    \item 
      We have $\seq{a f(n)} \fstred_T \seq{f(n)}$ where $T$ is an FST that replaces 
      every factor $0^a$ by~$0$.
      Clearly, the inverse operation is also FST-realisable, so $\seq{f(n)} \fstred \seq{a f(n)}$.
    \item 
      An FST that prepends 
      $\prod_{i = 0}^{a-1} 1 0^{f(i)}$
      to the incoming word
      witnesses $\seq{f(n + a)} \fstred \seq{f(n)}$.
      For the converse direction, let the FST remove the first $a$ blocks from the input.
    \item 
      To see that $\seq{f(n) + a} \fstred \seq{f(n)}$, 
      consider an FST that removes a factor $0^a$ from every block in $\seq{f(n) + a}$.
      Conversely, $\seq{f(n)} \fstred \seq{f(n) + a}$ is witnessed by an FST that appends $0^a$ to every block.
    \item 
      It is clear that the operation of selecting every $a$-th block is realisable by an FST.
    \item 
      By Theorem~\ref{thm:transducts}.
    \qed
  \end{enumerate}
\end{proof}

\begin{definition}\label{def:periodic}
  Let $A$ be a set.
  A function $f : \nat \to A$ 
  is \emph{ultimately periodic}
  if for some integers $n_0 \ge 0$, $p > 0$ 
  we have
  $f(n + p) = f(n)$ for all 
  $n \ge n_0$.
\end{definition}  

\begin{definition}\label{def:spiralling}
  A function $f : \nat \to \nat$ 
  is called \emph{spiralling}
  if 
  \begin{enumerate}
    \item{}\label{def:spiralling:item:i}
      $\lim_{n \to \infty} f(n) = \infty$, and
    \item{}\label{def:spiralling:item:ii} 
      for every $m \ge 1$, the function $n \mapsto f(n) \bmod m$ is ultimately periodic.
  \end{enumerate}
\end{definition}

Functions with the property \ref{def:spiralling:item:ii} in Definition~\ref{def:spiralling}
have been called `ultimately periodic reducible' by Siefkes \cite{sief:1971} (quotation from \cite{seif:mcna:1976}). 
Note that the identity function is spiralling.
Furthermore scalar products, and pointwise sums and products of spiralling functions 
are again spiralling. 
As a consequence also polynomials $a_k n^k + a_{k-1} n^{k-1} + \cdots + a_0$ are spiralling.

In the remainder of this section, we will characterise the transducts $\sigma$ of $\seq{f}$ for spiralling $f$.
We will show that if such a transduct $\sigma$ is not ultimately periodic, 
then it is equivalent to a sequence $\seq{g}$ for a spiralling function $g$, 
and moreover, $g$ can be obtained from $f$ by a `weighted product'.

\begin{lemma}\label{lem:transducts}
  Let $f : \nat \to \nat$ be a spiralling function.
  We have $\seq{f} \fstred \sigma$
  if and only if 
  $\sigma$ is of the form
  \begin{align}
    \sigma = w \cdot \prod_{i = 0}^{\infty} \prod_{j = 0}^{m-1} p_j \, c_j^{\cyc{i}{j}}
    && \text{where} &&
    \cyc{i}{j} = \frac{f(n_0 + m i + j) - a_j}{z} \,,
    \label{eq:double:prod}
  \end{align}
  for some integers $n_0,m,a_j \ge 0$ and $z > 0$, 
  and finite words~$w, p_j, c_j \in \two^*$ $(0 \le j \lt m)$
  such that $\cyc{i}{j} \in \nat$ for all $i \in \nat$ and $j \in \nat_{<m}$.
\end{lemma}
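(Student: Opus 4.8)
The plan is to prove the two directions separately. For the ``if'' direction, suppose $\sigma$ has the stated form with parameters $n_0,m,z,a_j,w,p_j,c_j$. I would build an FST that first consumes the finite prefix $10^{f(0)}\cdots 10^{f(n_0-1)}$ of $\seq{f}$, emitting $w$ (this uses finitely many states), and then, for each subsequent group of $m$ consecutive blocks $10^{f(n_0+mi+j)}$ of $\seq{f}$, outputs $p_j c_j^{\cyc{i}{j}}$. The key point is that reading the block $10^{f(n_0+mi+j)}$ of length $f(n_0+mi+j)+1$ the machine must emit $p_j c_j^{(f(n_0+mi+j)-a_j)/z}$; it does this by skipping the initial $1$, then discarding $a_j$ zeros while emitting $p_j$, and thereafter emitting one copy of $c_j$ for every $z$ zeros read (a $z$-cycle of zero-transitions). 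A counter modulo $m$ tracks which $j\in\nat_{<m}$ we are in, and is incremented on each $1$. Since $\cyc{i}{j}\in\nat$ is assumed for all $i,j$, the divisions are exact and the construction is well defined; since $f$ is spiralling, property~\ref{def:spiralling:item:i} guarantees each block is eventually long enough to absorb the $a_j$ zeros, and short blocks only occur finitely often so can be handled by the finite prefix handling.

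For the ``only if'' direction, suppose $\seq{f}\fstred_T\sigma$ for an FST $T=\tup{Q,q_0,\delta,\lambda}$. Set $N=|Q|$ and $z=\zloops{T}$. Since $f$ is spiralling there is an $n_1$ with $f(n)\ge N$ for all $n\ge n_1$, and by property~\ref{def:spiralling:item:ii} the function $n\mapsto f(n)\bmod z$ is ultimately periodic, with some period $m'$ and threshold $n_2$. Also the state of $T$ reached after processing the first $n$ blocks, call it $r_n$, depends only on $f$ restricted to $\{0,\dots,n-1\}$; I claim $r_n$ is itself ultimately periodic in $n$. Indeed, by Lemma~\ref{lem:pumping}, once $f(n)\ge N$ the state $\delta(r_n,10^{f(n)})$ depends only on $r_n$ and on $f(n)\bmod z$ (since $\delta(q,10^{k})=\delta(q,10^{k+z})$ for $k\ge N$), so $r_{n+1}$ is a function of $(r_n, f(n)\bmod z)$; as the latter is ultimately periodic and $Q$ is finite, the pair $(r_n, f(n)\bmod z)$ — hence $r_n$ — is ultimately periodic, say with period $m$ (a common multiple of $m'$ and the cycle length of the state sequence) and threshold $n_0\ge\max(n_1,n_2)$. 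Now fix $j\in\nat_{<m}$ and consider the blocks $n=n_0+mi+j$ for $i\in\nat$: the incoming state $r_n$ is the constant $q_j:=r_{n_0+j}$, and $f(n)\bmod z$ is a constant $\rho_j$, so $f(n)=z\cdot\cyc{i}{j}+a_j$ where $a_j\in\{0,\dots,z-1\}$ is chosen so that $a_j\equiv f(n_0+j)\pmod z$ and $\cyc{i}{j}=(f(n)-a_j)/z$ (nonnegative since $f(n)\ge N\ge z$ for $i$ large, and finitely many small exceptions are absorbed into $w$ by enlarging $n_0$). Applying Lemma~\ref{lem:pumping} to $q=q_j$ and the base length $n_{*}=a_j$ (suitably padded up past $N$ — more carefully, apply it with $n=a_j+Nz$ and fold the constant offset $Nz$ into $a_j$, or simply note $a_j$ can be taken $\ge N$ by the same padding trick), we get words $p_j,c_j$ with $\lambda(q_j,10^{f(n_0+mi+j)})=p_j c_j^{\,i}$ reshaped to $p_j c_j^{\,\cyc{i}{j}}$ after re-indexing. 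Letting $w=\lambda(q_0,10^{f(0)}\cdots 10^{f(n_0-1)})$ and concatenating over $i$ and $j$ gives exactly~\eqref{eq:double:prod}.

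The main obstacle I anticipate is the bookkeeping in the ``only if'' direction around making $a_j\ge|Q|$ so that Lemma~\ref{lem:pumping} applies cleanly while keeping $\cyc{i}{j}=(f(n_0+mi+j)-a_j)/z$ a \emph{nonnegative} integer for \emph{all} $i$, including $i=0$. The resolution is to absorb all the finitely many ``short'' or ``irregular'' initial blocks — those with $f(n)<|Q|$, those before the state sequence and the residues $f(n)\bmod z$ have become periodic, and a further $|Q|$-worth of slack needed so the pumping decomposition has its period before its first full cycle — into the finite prefix word $w$ by choosing $n_0$ large enough, and to note that adding a multiple of $z$ to $a_j$ only shifts which copy of $c_j$ is counted as part of $p_j$, so we may freely assume $a_j\ge|Q|$. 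A secondary subtlety is verifying that the state sequence $r_n$ really is ultimately periodic: this needs the observation that $r_{n+1}$ is \emph{determined} by $r_n$ together with $f(n)\bmod \zloops{T}$ once $f(n)\ge|Q|$, which is precisely what Lemma~\ref{lem:pumping} gives (the first clause $\delta(q,10^{n+i\zloops{T}})=\delta(q,10^n)$), combined with the hypothesis that $f$ is spiralling.
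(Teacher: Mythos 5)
Your proposal is correct and follows essentially the same route as the paper's proof: for the forward direction, ultimate periodicity of $f \bmod \zloops{T}$ plus finiteness of $Q$ yields a common period $m$ and offset $n_0$ for the block-entry states, after which Lemma~\ref{lem:pumping} gives the words $p_j, c_j$; for the converse, an explicit FST with a counter modulo $z$ that discards $a_j$ zeros and emits one $c_j$ per $z$ zeros. The only (harmless) divergence is your choice of $a_j$ as the least residue modulo $z$ followed by padding, where the paper simply sets $a_j = \min_i f(n_0+mi+j)$, which makes $a_j \ge |Q|$ and $\cyc{i}{j} \ge 0$ hold at once.
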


\begin{proof}
  Assume $\seq{f} \fstred \sigma$ and let $T = \tup{Q,q_0,\delta,\lambda}$ be an FST
  that transduces $\seq{f}$ to $\sigma$.
  As 
  $f$ is spiralling, 
  there exist $\ell_0, p \in \nat$, $p > 0$ such that 
  $f(n) \equiv f(n+p) \pmod{\zloops{T}}$ for every $n \ge \ell_0$.
  Moreover, as a consequence of $\lim_{n \to \infty} f(n) = \infty$, there exists $\ell_1 \in \nat$ such that 
  $f(n) \ge |Q|$ for every $n \ge \ell_1$.

  For $n \in \nat$, let $q_n \in Q$ be the state that the automaton $T$ is in, before 
  reading the $n$-th occurrence of $1$ in the sequence $\sigma$
  (i.e., the start of the block $1 0^{f(n)}$). 
  By the pigeonhole principle there exist $n_0,m \in \nat$
  with $\max \{\ell_0,\ell_1\} < n_0$ and $m > 0$
  such that $m \equiv 0 \pmod{p}$ and $q_{n_0} = q_{n_0+m}$.
  Then, for every $i \in \nat$ and $j \in \nat_{<m}$, we have 
  $f(n_0 + mi + j) \ge |Q|$ and 
  $f(n_0 + mi + j) \equiv f(n_0 + j) \pmod{\zloops{t}}$.
  For $j \in \nat_{<m}$, 
  we define $a_j = \min \{f(n_0 + mi + j) \where i \in \nat\}$.
  Note that $a_j \ge |Q|$.
  Then for all $i \in \nat$ and $j \in \nat_{<m}$ we have: 
  $f(n_0 + mi + j) = a_j + \cyc{i}{j} \cdot z$
  where $\cyc{i}{j} = (f(n_0 + mi + j)-a_j) / z$ and $z = \zloops{T}$.
  Using Lemma~\ref{lem:pumping} it follows by induction that
  $q_{n_0+n} = q_{n_0+m+n}$ for every $n \in \nat$.
  Hence
  $q_{n_0+j} = q_{n_0+mi+j}$ for every $i \in \nat$ and $j \in \nat_{<m}$.
  Then, again by Lemma~\ref{lem:pumping}, for every $j \in \nat_{<m}$
  there exist words~$p_j,c_j \in \two^*$
  such that 
  $\lambda(q_{n_0+mi+j}, 1 0^{f(n_0 + mi + j)}) 
    = \lambda(q_{n_0+j}, 1 0^{a_j + \cyc{i}{j} \cdot \zloops{T}}) 
    = p_j c_j^{\cyc{i}{j}}$
  for all $i \in \nat$.
  We conclude that 
  \begin{align*}
    \sigma 
    = \prod_{n = 0}^{\infty} \lambda(q_n,10^{f(n)}) 
    = w \cdot \prod_{i = 0}^{\infty} \prod_{j = 0}^{m-1} p_j \, c_j^{\cyc{i}{j}}\,,
  \end{align*}
  where $w = \prod_{n = 0}^{n_0-1} \lambda(q_n,10^{f(n)})$.

  For the other direction, 
  assume $\sigma$ is of the form \eqref{eq:double:prod}
  for some $n_0,m,a_j,z \in \nat$ with $z > 0$, 
  and~$w, p_j, c_j \in \two^*$ $(0 \le j \lt m)$
  such that $\cyc{i}{j} = (f(n_0 + m i + j) - a_j)/z \in \nat$ for all $i \in \nat$ and $j \in \nat_{<m}$.
  We define an FST $T$ such that $\seq{f} \fstred_{T} \sigma'$, where 
  \begin{align*}
    \sigma' = \prod_{i = 0}^{\infty} \prod_{j = 0}^{m-1} p_j \, c_j^{\cyc{i}{j}}
  \end{align*}
  Define $T = \tup{Q,q_{m-1}^0,\delta,\lambda}$ 
  where $Q = \{q_j^h \where j \in \nat_{<m},\, {-}a_j \le h < z\}$,
  and 
  \begin{align*}
    \pair{\delta}{\lambda}(q_j^h,0) & = \pair{q_j^{h'}}{0^e} && 
    \text{where $e = 1$ and $h' = 0$ if $h + 1 = z$,} \\[-.5ex]
    &&&\text{and $e = 0$ and $h' = h + 1$, otherwise,} \\
    \pair{\delta}{\lambda}(q_j^h,1) & = \pair{q_{j'}^{{-}a_{j'}}}{1} && (j' = j+1 \bmod{m}) \,.
  \end{align*}
  For the verification of $\seq{f} \fstred_{T} \sigma'$ 
  we refer to the proof of Lemma~\ref{lem:wprod:FST}
  that contains a more general construction.
  The transduction $\sigma' \fstred \shift{\length{w}}{\sigma}$ 
  is realised by the FST in Figure~\ref{fig:easy}.
  Clearly also $\shift{\length{w}}{\sigma} \fstred \sigma$ holds, and we conclude $\seq{f} \fstred \sigma$.
  \qed
\end{proof}

\begin{figure}[t]
  \begin{center}
    \begin{tikzpicture}[scale=0.9,n/.style={circle,minimum size=8mm,draw,outer sep=1mm,inner sep=0mm}, node distance=25mm, l/.style={scale=.9}]
      \node (q0) [n] {$q_0$};
      \node (q1) [n,right of=q0] {$q_1$};
      \node (q2) [circle,minimum size=8mm,right of=q1,node distance=22mm] {$\cdots$};
      \node (qm-1) [n,right of=q2,,node distance=22mm] {$q_{m-1}$};
      \node (start) [right of=qm-1, node distance=11mm] {};
      \begin{scope}[->]
        \draw (start) -- (qm-1);
        \draw (q0) to[out=60,in=120,looseness=5] node [l,above] {$0|c_0$} (q0);
        \draw (q0) to node [l,above] {$1|p_1$} (q1);
        \draw (q1) to[out=60,in=120,looseness=5] node [l,above] {$0|c_1$} (q1);
        \draw (q1) to[dashed] node [l,above] {$1|p_2$} (q2);
        \draw (q2) to node [l,above] {$1|p_{m-1}$} (qm-1);
        \draw (qm-1) to[out=60,in=120,looseness=5] node [l,above] {$0|c_{m-1}$} (qm-1);
        \draw (qm-1) to[out=-140,in=-40,looseness=.5,pos=.45] node [l,above] {$1|p_0$} (q0);
      \end{scope}
    \end{tikzpicture}
  \end{center}
  \caption{%
    \textit{%
      An FST that transduces $\prod_{i = 0}^{\infty} \prod_{j = 0}^{m-1} 1 \, 0^{\cyc{i}{j}}$
      into $\prod_{i = 0}^{\infty} \prod_{j = 0}^{m-1} p_j \, c_j^{\cyc{i}{j}}$.%
  }}
  \label{fig:easy}
\end{figure}
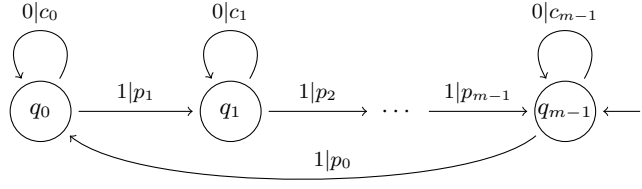

\begin{example}\label{ex:label}
  We illustrate the influence of both the zero-loops of the automaton 
  as well as the growth of the block lengths of the input sequence,
  on the size~$m$ of the inner product of Lemma~\ref{lem:transducts}.
  Consider the FST~$T = \tup{\{q_0,q_1,q_2\},q_0,\delta,\lambda}$ given by 
  \begin{center}
    \begin{tikzpicture}[n/.style={circle,minimum size=6mm,draw,outer sep=1mm}, node distance=25mm, l/.style={scale=.9}]
    \node (q0) [n] {$q_0$};
    \node (q1) [n,below right of=q0] {$q_1$};
    \node (q2) [n,above right of=q1] {$q_2$};
    \node (start) [left of=q0, node distance=10mm] {};
    
    \begin{scope}[->]
    \draw (q0) to[bend left=15] node [l,sloped,above,pos=.6,inner sep=.3mm] {$0|00$} (q1);
    \draw (q0) to[bend left=15] node [l,above] {$1|1$} (q2);
    
    \draw (q1) to[bend left=15] node [l,sloped,below] {$1|1$} (q0);
    \draw (q1) to[bend left=15] node [l,sloped,above,pos=.4,inner sep=.3mm] {$0|01$} (q2);
    
    \draw (q2) to[bend left=15] node [l,above] {$0|10$} (q0);
    \draw (q2) to[bend left=15] node [l,sloped,below] {$1|1$} (q1);
    
    \draw (start) -- (q0);
    \end{scope}
    \end{tikzpicture}
  \end{center}
  and the sequence 
  \begin{align*}
    \seq{\floor{\frac{n}{2}}} = 1 1 10 10 10^2 10^2 10^3 10^3 10^4 10^4 \cdots \,,
  \end{align*}
  where $\floor{x} = \max \{ n \in \nat \where n \le x \}$.
%
  We investigate the sequence $r_0 a_0 r_1 a_1 r_2 \cdots$ of states~$r$ of $T$ alternated 
  with letters~$a$ from the input sequence, in such a way that $T$ is in state $r_n$ 
  after having read the word $a_0 a_1 \cdots a_{n-1}$,
  \begin{align*}
    \underline{q_0} 1 q_2 1 q_1 1 q_0 0 q_1 1 q_0 0 
    q_1 1 q_0 0 q_1 0 q_2 1 q_1 0 q_2 0 
    \underline{q_0} 1 q_2 0 q_0 0 q_1 0 q_2 1 q_1 0 q_2 0 q_0 0 \cdots
  \end{align*}
  The two underlined occurrences of $q_0$ indicate a repetition of states 
  in combination with a repetition of the block size modulo $\zloops{T} = 3$.
  The number of blocks between these occurrences, forming the repetition, is $m = 6$.
  Actually, following the proof of Lemma~\ref{lem:transducts} precisely, 
  the algorithm would instead select the repetition 
  starting from the second underlined occurrence of $q_0$. 
  The reason is that in general, when reading a block $100\cdots0$,
  only after reading $|Q|$ zeros, we are guaranteed to be in a zero-loop.
  For this FST~$T$, all states are on a zero-loop, 
  and so we enter the loop immediately.
\end{example}

From Lemma~\ref{lem:transducts} it follows that FSTs can only transform 
the length of blocks by linear functions (and merge blocks).
As a consequence, FSTs typically cannot slow down the growth rate of blocks 
in spiralling sequences by more than a linear factor.
This yields the following simple criterion for non-reducibility. 
\begin{lemma}\label{lem:too:fast}
  Let $f,g : \nat \to \nat$ be
  such that $f$ is spiralling, $g$ is not ultimately periodic and $g \in o(f)$, 
  i.e., for every $a \in \nat$ there is $n_0 \in \nat$
  such that for all $n \ge n_0$ it holds that $f(n) \ge a g(n)$.
  Then $\seq{f} \not\fstred \seq{g}$. 
\end{lemma}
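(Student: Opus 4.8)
The plan is to derive a contradiction from the assumption that $\seq{f} \fstred \seq{g}$ using the structural characterisation of transducts of spiralling sequences provided by Lemma~\ref{lem:transducts}, together with the growth hypothesis $g \in o(f)$. Suppose, towards a contradiction, that $\seq{f} \fstred \seq{g}$ via some FST $T$. Since $f$ is spiralling, Lemma~\ref{lem:transducts} applies and tells us that $\seq{g}$ must be of the form $w \cdot \prod_{i=0}^{\infty} \prod_{j=0}^{m-1} p_j\, c_j^{\cyc{i}{j}}$ with $\cyc{i}{j} = (f(n_0 + mi + j) - a_j)/z$ for suitable constants $n_0, m, a_j, z$ and finite words $w, p_j, c_j$. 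The key point is that the blocks of $\seq{g}$ (the maximal runs $10^{g(k)}$) are built by concatenating the fixed words $p_j c_j^{\cyc{i}{j}}$, so each single value $g(k)$ counts zeros contributed by at most a bounded-in-number of these factors.

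First I would make precise how $g(k)$ relates to the $\cyc{i}{j}$. The $n$-th block of $\seq{g}$, namely $10^{g(n)}$, sits somewhere inside the concatenation $w \cdot \prod_i \prod_j p_j c_j^{\cyc{i}{j}}$; its zeros come from the $0$'s inside finitely many of the words $p_j$ and from the $c_j^{\cyc{i}{j}}$ blocks that happen to consist only of zeros (if some $c_j$ contains a $1$, then $c_j^{\cyc{i}{j}}$ for large $\cyc{i}{j}$ contributes many $1$'s, hence many short blocks, and cannot be the main contributor to a single long block of $\seq{g}$). After discarding the finite prefix $w$ and a bounded initial segment, every block $10^{g(n)}$ of $\seq{g}$ is contained within the span of a bounded number — say at most $m$ — of consecutive factors $p_j c_j^{\cyc{i}{j}}$. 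Hence there is a constant $C$ and, for each large $n$, an index $i(n)$ and a $j$ with $g(n) \le C + \max_{j} |c_j| \cdot \cyc{i(n)}{j} \cdot m \le C + C' \cdot \max_j \cyc{i(n)}{j}$ for suitable constants. Since $\cyc{i}{j} \le f(n_0 + mi + j)/z$, we get $g(n) \le C + (C'/z)\cdot \max_{0 \le j < m} f(n_0 + m\,i(n) + j)$.

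Now I would use the hypothesis that $g$ is not ultimately periodic: this forces $i(n) \to \infty$ as $n \to \infty$ (if $i(n)$ stayed bounded, only finitely many distinct factors would be involved beyond the prefix, making $g$ ultimately periodic — actually one must argue this carefully, since blocks could still vary; the cleaner route is that $g$ not ultimately periodic together with the eventual periodicity modulo any $m$ of the spiralling $f$ forces $g$ itself to be unbounded, and the bounded-prefix structure then forces $i(n) \to \infty$). With $i(n) \to \infty$, set $N(n) = n_0 + m\,i(n) + m$; then $g(n) \le C + (C'/z)\cdot f(N'(n))$ for some $N'(n) \le N(n)$. Finally I apply $g \in o(f)$ in the contrapositive-friendly direction: $g \in o(f)$ says in particular that for the constant $a = \lceil 2C'/z \rceil + 1$ there is $n_1$ with $f(k) \ge a\, g(k)$ for all $k \ge n_1$. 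Combining with monotone-ish bounds on $f$ — here I need that $f(N'(n))$ can be controlled by $g$ at a nearby index, which is the delicate bookkeeping step — yields $g(n) \le \tfrac{1}{2} f$ evaluated near $n$ while simultaneously $f$ near $n$ is $\ge a g \ge a g(n)$, forcing $g(n)$ bounded, contradicting unboundedness of $g$.

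The main obstacle is the index-matching bookkeeping: the factorisation of $\seq{g}$ from Lemma~\ref{lem:transducts} is indexed by the \emph{input} positions $n_0 + mi + j$ of $\seq{f}$, whereas $g$ is indexed by \emph{output} block positions, and the correspondence $n \mapsto i(n)$ between them is only loosely monotone. I expect the clean fix is to avoid pointwise comparison entirely and instead argue with liminf/limsup: from $g \in o(f)$ one shows $\limsup_n g(n) = \infty$ would force, via the bound $g(n) \le C + C'' \max_{j} \cyc{i(n)}{j}$ and the definition of $o(f)$, a term $f(k)$ that is both $\ge a g(n)$ and (being one of the factors inside the same bounded window producing block $n$) of the same order as $g(n)$ — a contradiction for $a$ large. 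The subtlety to get right is precisely which $f$-values lie "inside the window" of output block $n$ and hence are comparable to $g(n)$; making that precise is where the real work lies, and it is essentially the observation that a long block $10^{g(n)}$ of the output can only be produced by reading input blocks whose lengths are $\Omega(g(n))$, since an FST reading $0^\ell$ outputs at most $O(\ell)$ zeros.
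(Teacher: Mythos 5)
Your overall strategy --- apply Lemma~\ref{lem:transducts} to an assumed transduction $\seq{f}\fstred\seq{g}$ and play the resulting double-product form of $\seq{g}$ off against $g\in o(f)$ --- is the route the paper intends (the lemma is stated as a consequence of Lemma~\ref{lem:transducts}). But the proof is not complete, and the step you defer as ``delicate bookkeeping'' is where the entire content of the lemma lies. Everything you actually establish is an \emph{upper} bound $g(n)\le C+C'\max_j f(n_0+m\,i(n)+j)$. The hypothesis $f(n)\ge a\,g(n)$ is likewise an upper bound on $g$ in terms of $f$; two bounds pointing in the same direction cannot yield a contradiction. What is needed is a \emph{lower} bound: reading the input block $10^{f(n)}$ forces the transducer to emit $\Omega(f(n))$ symbols, and one must show this mass lands on output blocks whose indices $l$ satisfy that $f(l)$ is comparable to $f(n)$, so that $g(l)=\Omega(f(n))$ contradicts $f(l)\ge a\,g(l)$. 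You state the relevant fact only in the reversed, unhelpful direction in your final sentence, and the correspondence $n\mapsto l$ between input-block and output-block indices is never controlled.

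Moreover, this gap cannot be closed by more careful bookkeeping with the hypotheses exactly as stated. Take $f(n)=4^n$ and let $T$ be the one-state FST with $\lambda(q,1)=11$ and $\lambda(q,0)=0$. Then $T(\seq{f})=\seq{g}$ where $g=\zip_2(n\mapsto 0,\;n\mapsto 4^n)$, i.e.\ $g(2k)=0$ and $g(2k+1)=4^k$. Here $f$ is spiralling, $g$ is not ultimately periodic, and for every $a$ one has $f(n)\ge a\,g(n)$ for all large $n$ (for $n=2k+1$ the ratio $f(n)/g(n)=4^{k+1}$ tends to infinity; for even $n$ it is trivial), yet $\seq{f}\fstred\seq{g}$. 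The moral is that an FST can shift the mass $f(n)$ of input block $n$ onto an output block whose index (here $2n+1$) is one at which $f$ is far larger, so the purely pointwise, same-index domination ``$f(n)\ge a g(n)$ eventually'' is compatible with transducibility. A correct argument must use a domination of $g$ by $f$ that is stable under the affine reindexings an FST can induce (as holds in the paper's actual applications, where $f$ and $g$ are increasing exponentials); your write-up neither identifies nor uses such a property, so the plan as laid out cannot be completed.
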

Note that the reverse $\seq{g} \not\fstred \seq{f}$ does \emph{not} follow.
For example, we have $\seq{4^n} \not\fstred \seq{2^n}$, but $\seq{2^n} \fstred \seq{4^n}$.

There can be several ways to factor the transduct~$\sigma$ 
in the statement of Lemma~\ref{lem:transducts},
as shown by the following example.

\begin{example}\label{ex:ambigue}
  Consider the following FST~$T = \tup{\{q_0,q_1\},q_0,\delta,\lambda}$
  \begin{center}
    \begin{tikzpicture}[n/.style={circle,minimum size=6mm,draw,outer sep=1mm}, node distance=20mm, l/.style={scale=.9}]
      \node (q0) [n] {$q_0$};
      \node (q1) [n,right of=q0] {$q_1$};
      \node (start) [left of=q0, node distance=10mm] {};
      
      \begin{scope}[->]
      \draw (q0) to[bend left=30] node [l,above] {$1|1$} (q1);
      \draw (q0) to[out=-60,in=-120,looseness=5] node [l,below] {$0|01$} (q0);
  
      \draw (q1) to[bend left=30] node [l,below] {$1|1$} (q0);
      \draw (q1) to[out=-60,in=-120,looseness=5] node [l,below] {$0|10$} (q1);
  
      \draw (start) -- (q0);
      \end{scope}
    \end{tikzpicture}
  \end{center}
  together with the sequence $\seq{n} = 1 10 10^2 10^3 10^4 \cdots$.
  The $T$-transduct of $\seq{n}$ is 
  \begin{align*}
    T(\seq{n}) = 1 \, 1 (01) \, 1 (10)^2 \, 1 (01)^3 \, 1 (10)^4 \, \cdots\,.
  \end{align*}
  Using the double product~\eqref{eq:double:prod} of Lemma~\ref{lem:transducts} 
  we have $w = \emptyword$,
  $n_0 = 0$, $m = 2$, $p_0 = p_1 = 1$, $c_0 = 01$, $c_1 = 10$, $a_0 = a_1 = 0$, and $z = \zloops{T} = 1$.
  Thus, $\cyc{i}{j} = 2i + j$.  
  Note that $c_1^\omega = p_0 c_0^\omega$, and so we can merge the factors of the inner product, and decrease its size to $m = 1$.
  Then $T(\seq{n}) = 1101 \, 1101 (0101) \, 1101 (0101)^2 \, 1101 (0101)^3 \, \cdots$, that is, 
  in the double product~\eqref{eq:double:prod}, $T(\seq{n})$ can be factored by choosing 
  $m = 1$, $p_0 = 1101$, and $c_0 = 0101$.
\end{example}

Whenever we have the situation $c_j^\omega = p_{j+1}c_{j+1}^\omega$ in the representation~\eqref{eq:double:prod}
for some $j \in \nat_{<m}$ (addition is modulo~$m$), we speak of a `transition ambiguity'.
We will eliminate such ambiguities by merging factors of the inner product, 
as in~Lemma~\ref{lem:merge}.
In general, this merging may involve weighted sums in the exponents. 
The following lemma is folklore, see for instance~\cite{caut:mign:shal:wang:yazd:2003}.
\begin{lemma}\label{lem:folk}
  If a sequence $\sigma$ is periodic with period lengths $k$ and $\ell$,
  then it is periodic with period length $\gcd(k,\ell)$. 
\end{lemma}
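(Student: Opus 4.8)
The plan is to prove the statement by strong induction on $k + \ell$, using the standard Euclidean-algorithm structure that the naming `folklore' suggests. First I would dispose of the trivial case: if $k = \ell$, then $\gcd(k,\ell) = k$ and there is nothing to prove; more generally if $k \mid \ell$ or $\ell \mid k$ the claim is immediate since then $\gcd(k,\ell) = \min(k,\ell)$, which is already among the given periods. So assume without loss of generality $k < \ell$ and $k \nmid \ell$.

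The key step is to show that $\sigma$ is periodic with period length $\ell - k$ (or, staying closer to the pure Euclidean algorithm, with period length $\ell \bmod k$, but $\ell - k$ suffices for the induction on the sum). Fix any index $n \in \nat$. I want $\sigma(n) = \sigma(n + \ell - k)$. The idea is to relate this difference to the two given periods: from $\ell$-periodicity, $\sigma(n) = \sigma(n + \ell)$; and I would like to `walk back' by $k$ using $k$-periodicity, i.e.\ conclude $\sigma(n + \ell) = \sigma(n + \ell - k)$. Since $\sigma$ is an infinite sequence indexed by $\nat$ and $n + \ell - k \ge 0$ (as $\ell > k$), the index $n + \ell - k$ is legitimate, and $k$-periodicity gives $\sigma(n + \ell - k) = \sigma(n + \ell - k + k) = \sigma(n + \ell)$. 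Chaining the two equalities yields $\sigma(n) = \sigma(n + \ell - k)$, so $\sigma$ has period length $\ell - k$.

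Now $\sigma$ has period lengths $k$ and $\ell - k$, with $k + (\ell - k) = \ell < k + \ell$, so by the induction hypothesis $\sigma$ has period length $\gcd(k, \ell - k)$. Finally, $\gcd(k, \ell - k) = \gcd(k, \ell)$ is the elementary number-theoretic identity underlying the Euclidean algorithm, which completes the induction. (If one prefers the $\ell \bmod k$ variant, the same two-step argument iterated $\lfloor \ell/k \rfloor$ times gives period length $\ell \bmod k < k$, and one appeals to the induction hypothesis on $k + (\ell \bmod k)$; the bookkeeping is slightly heavier but the content is identical.)

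The only mild subtlety — and the thing I would be most careful about — is the boundary behaviour at small indices: because $\sigma$ is one-sided (indexed by $\nat$, not $\zz$), the `walk back by $k$' step must be phrased so that every index appearing is nonnegative, which is exactly why I wrote the chain as $\sigma(n + \ell - k) = \sigma(n + \ell)$ rather than $\sigma(n) = \sigma(n-k)$. With that phrasing there is no real obstacle; the argument is entirely routine. (Strictly, the paper's Definition~\ref{def:periodic} allows \emph{ultimately} periodic functions with a threshold $n_0$; if the statement is read in that generality one carries the threshold through the induction as well, taking the maximum of the two given thresholds, but for genuinely periodic $\sigma$ — which is how the lemma is invoked — no threshold is needed.)
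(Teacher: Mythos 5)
Your argument is correct: for a one-sided infinite sequence the forward-only chain $\sigma(n)=\sigma(n+\ell)=\sigma(n+\ell-k)$ legitimately establishes period $\ell-k$, and the induction on $k+\ell$ together with $\gcd(k,\ell-k)=\gcd(k,\ell)$ closes the argument. The paper itself gives no proof of this lemma --- it is stated as folklore with a citation to the literature --- so there is nothing to compare against; your self-contained Euclidean-style induction is exactly the standard way to prove it for infinite sequences (where, unlike the finite-word Fine--Wilf setting, no length threshold is needed), and your care about keeping all indices nonnegative is the only point that genuinely required attention.
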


\begin{lemma}\label{lem:merge}
  Let $u,v,w \in \two^*$ be finite words with $u,w \ne \emptyword$ 
  such that $u^\omega = v w^\omega$.
  Then there exists a word $x \in \two^*$ and $a,b \in \nat$ such that
  $u^m vw^n = vx^{am+bn}$ for all $m,n \in \nat$.  
\end{lemma}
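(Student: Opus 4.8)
The plan is to exploit Lemma~\ref{lem:folk} together with the arithmetic structure of the equation $u^\omega = v w^\omega$. First I would observe that since $u^\omega = v w^\omega$, the sequence $u^\omega$ is eventually periodic both with period $|u|$ (from the start) and, after the prefix $v$, with period $|w|$. Writing $d = \gcd(|u|,|w|)$, I would use Lemma~\ref{lem:folk} on the tail $w^\omega$ to see that it is periodic with period $d$; since $u^\omega$ agrees with $vw^\omega$ on its tail and is globally $|u|$-periodic, one deduces that $u^\omega$ itself is purely periodic with period $d$. Concretely, let $x$ be the length-$d$ prefix of $u^\omega$, so that $u^\omega = x^\omega$; then $|u| = a\,d$ and $|w| = b\,d$ for some $a,b \ge 1$, and also $u = x^a$. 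Moreover, because $v w^\omega = u^\omega = x^\omega$ and $w^\omega$ is $d$-periodic, $w$ is a cyclic rotation of $x^b$; and since $v$ is a prefix of $x^\omega$ whose removal still leaves a $d$-periodic (indeed $x^\omega$) tail, $v$ is a prefix of $x^\omega$ of length $|v|$, but in fact I can say more: $w^\omega = \shift{|v|}{x^\omega}$.

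Next I would compute $u^m v w^n$ directly in terms of $x$. We have $u^m = x^{am}$. For the factor $v w^n$: since $w^\omega = \shift{|v|}{x^\omega}$ and $|w| = bd$, each copy of $w$ ``advances'' the $x$-phase by $b$ letters of $x$, i.e. $v w^n$ is the prefix of $x^\omega$ of length $|v| + n\,bd$. Combined with the fact that $v$ itself is the length-$|v|$ prefix of $x^\omega$ and $x^\omega = u^\omega = x^{am} \cdot x^\omega$, the word $v w^n$ equals $x^{am}$ stripped back — more carefully, I would argue that $u^m v w^n$ is simply the prefix of $x^\omega$ of length $m|u| + |v| + n|w| = |v| + (am + bn)d$, hence equals $v$ followed by $x^{am+bn}$, provided $|v|$ is itself a multiple of $d$ so that ``$v \cdot x^{\text{something}}$'' lands back in phase. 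That last proviso does hold: $v$ is a prefix of $x^\omega$ and $v w^\omega = x^\omega$ forces $w^\omega$ to be in the same phase as $x^\omega$, which means $|v| \equiv 0 \pmod d$. So $v = x^{c}\cdot(\text{nothing})$... actually $|v|$ a multiple of $d$ gives $v = (\text{prefix of } x^\omega \text{ of length } |v|)$, and since that length is a multiple of $d$, $v$ is a concatenation of full copies of $x$ only if the phase is $0$; in any case $v \cdot x^k$ is again a prefix of $x^\omega$ for every $k$, which is what I need. Then $u^m v w^n = (\text{prefix of }x^\omega\text{ of length }|v|+(am+bn)d) = v \cdot x^{am+bn}$, giving the claim with these $a,b$ and this $x$.

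The main obstacle I expect is the bookkeeping of \emph{phases}: showing rigorously that removing the prefix $v$ from $x^\omega$ leaves something that is again a shift of $x^\omega$ by a multiple of $d$, i.e. that $|v| \equiv 0 \pmod{d}$ and that $w$ is a length-$bd$ rotation of $x^\omega$ consistent with that phase. The clean way to handle this is to work entirely with the doubly-infinite/one-sided periodic sequence $x^\omega$ and the shift operator $\sshift$: from $u^\omega = x^\omega$ and $w^\omega$ being $d$-periodic one gets $w^\omega = \shift{r}{x^\omega}$ for $r = |v| \bmod d$, but $v w^\omega = x^\omega$ then forces $\shift{|v|}{x^\omega} = w^\omega = \shift{r}{x^\omega}$ hence $\shift{|v|}{x^\omega} = \shift{r}{x^\omega}$, and comparing with $d$-periodicity of $x^\omega$ pins down $r$ and ultimately $r=0$. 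Once the phases are aligned, the final identity $u^m v w^n = v x^{am+bn}$ is just a length computation on prefixes of the single periodic word $x^\omega$, which is routine.
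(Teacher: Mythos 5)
Your overall strategy coincides with the paper's: apply Lemma~\ref{lem:folk} to get that $\sigma = u^\omega = vw^\omega$ is periodic with period $d = \gcd(|u|,|w|)$, set $a = |u|/d$, $b = |w|/d$, and then reduce the identity $u^m v w^n = v x^{am+bn}$ to a comparison of two prefixes of $\sigma$ of equal length. However, your execution contains a genuine error in the choice of $x$ and in the ``phase'' claim you use to justify it. You assert that $vw^\omega = u^\omega$ forces $|v| \equiv 0 \pmod{d}$; this is false. Take $u = 01$, $v = 0$, $w = 10$: then $u^\omega = 0101\cdots = v w^\omega$, $d = \gcd(2,2) = 2$, yet $|v| = 1$. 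With your choice of $x$ as the length-$d$ prefix of $u^\omega$, i.e.\ $x = 01$, the conclusion fails: for $m = n = 1$ one has $u\,v\,w = 01\,0\,10 = 01010$, while $v x^{a+b} = 0\,(01)^2 = 00101$. Your fallback remark that ``in any case $v\cdot x^k$ is again a prefix of $x^\omega$'' is false for the same reason ($vx = 001$ is not a prefix of $0101\cdots$), and the argument in your final paragraph that ``pins down $r=0$'' only re-derives $r = |v| \bmod d$; nothing forces $r = 0$.

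The repair is exactly the paper's choice: take $x$ to be the length-$d$ \emph{suffix} (equivalently, by $d$-periodicity of $w^\omega$, the length-$d$ prefix) of $w$, rather than of $u$. Then $w = x^b$, hence $w^\omega = x^\omega$ and $\sigma = v x^\omega$, so $v x^{am+bn}$ is a prefix of $\sigma$ with no phase condition on $|v|$ whatsoever; and $u^m v w^n$ is a prefix of $\sigma$ because $u^m\sigma = \sigma$ and $vw^n \prefixof \sigma$. The two prefixes have the same length $m|u| + |v| + n|w| = |v| + (am+bn)d$, and the identity follows. In the example above this gives $x = 10$ and $v x^2 = 0\,10\,10 = 01010$, as required. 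So the lemma survives, but only with $x$ anchored to $w$'s phase, not to $u$'s.
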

\begin{proof}
  Let $d = \gcd(|u|,|w|)$, $a = |u|/d$ and $b = |w|/d$.
  Let $x$ be the suffix of~$w$ of length $d$.
  We argue that this choice has the desired property. 
  %
  The sequence $\sigma = u^\omega = v w^\omega$ is periodic with period lengths $|u|$ and $|w|$.
  By Lemma~\ref{lem:folk}
  it follows that $\sigma$ is also periodic with period length~$d$.
  Hence $w = x^b$ and 
  for all $m,n \in \nat$, $u^m v w^n$ 
  is a prefix of $\sigma$.
  Likewise $v x^{am + bn}$ is a prefix of~$\sigma = v x^\omega$,
  and both prefixes have the same length:
  $|u^m vw^n| = m|u| + |v| + n|w| = |v| + am|x| + bn|x| = |v x^{am + bn}|$.
  We conclude $u^m vw^n = v x^{am + bn}$.
  \qed
\end{proof}

Our goal is to obtain a simple characterisation of the degrees of the transducts~$\sigma$ 
of spiralling sequences~$\seq{f}$.
To this end, we transform the transduct~$\sigma$ into a sequence~$\sigma'$
by replacing in the double product of Lemma~\ref{lem:transducts}
the displayed occurrence of~$p_j$ by~$1$ and of~$c_j$ by~$0$ 
for every $j \in \nat_{<m}$.
To guarantee that this transformation does not change the degree, 
that is $\sigma' \fstconv \sigma$,
we first have to resolve transition ambiguities.

For the back transformation blocks $10^{\cyc{i}{j}}$ 
have to be replaced by $p_j c_j^{\cyc{i}{j}}$,
an operation that is easily realised by an FST,
see Figure~\ref{fig:easy}.

If the product does not contain transition ambiguities, 
then also the transformation from $\sigma$ into $\sigma'$ 
can be realised by an FST and thus does not change the degree of the sequence, 
hence $\sigma \fstconv \sigma'$.
If there exists $j \in \nat_{<m}$ with
$c_j^\omega = p_{j+1}c_{j+1}^\omega$,
then, by this transformation of $\sigma$ into $\sigma'$, 
one possibly leaves the degree of $\sigma$,
i.e., $\sigma' \fstredstrict \sigma$.
because for large enough~$i\in\nat$, 
an FST cannot recognise where a block~$p_j c_j^{\cyc{i}{j}}$ ends 
and where the next block~$p_{j+1} c_{j+1}^{\cyc{i}{j+1}}$ starts.
This might make it impossible to realise the transformation by an FST,
as then the FST cannot replace $p_{j+1}$ by $1$.

\begin{definition}\label{def:weighted:product}
  A \emph{weight} is a tuple $\tup{a_0,\ldots,a_{k-1},b} \in \rat^{k+1}$ 
  of rational numbers
  such that $a_0,\ldots,a_{k-1} \ge 0$. 
  Given a weight~$\alpha = \tup{a_0,\ldots,a_{k-1},b}$ and a function $f : \nat \to \nat$
  we define
  $\wof{\alpha}{f} \in \rat$ by
  \begin{align*}
    \wof{\alpha}{f} \;=\; a_0 f(0) + a_1 f(1) + \cdots + a_{k-1} f(k-1) + b \,.
  \end{align*}
  The weight~$\alpha$ is called \emph{constant} when $a_j = 0$ for all $j \in \nat_{<k}$.
  For a tuple of weights $\vec{\alpha} = \tup{\alpha_0,\ldots,\alpha_{m-1}}$ 
  we define its \emph{rotation} by $\vec{\alpha}' = \tup{\alpha_1,\ldots ,\alpha_{m-1},\alpha_0}$.
  
  
  For functions $f : \nat \to \nat$,
  and tuples $\vec{\alpha} = \tup{\alpha_0,\alpha_1,\ldots, \alpha_{m-1}}$ of weights,
  the \emph{weighted product} of $\vec{\alpha}$ and $f$ is a function $\wprod{\vec{\alpha}}{f} : \nat \to \rat$
  that is defined by induction on $n$ through the following scheme of equations: 
  \begin{align*}
    (\wprod{\vec{\alpha}}{f})(0) & = \alpha_0 \cdot f \\
    (\wprod{\vec{\alpha}}{f})(n+1) & = (\wprod{\vec{\alpha}'}{\shift{|\alpha_0|-1}{f}})(n) && (n \in \nat)
  \end{align*}
  where $|\alpha_i|$ is the length of the tuple $\alpha_i$,
  and $\shift{k}{f} $ is the $k$-th shift of $f$.
  A weighted product $\wprod{\vec{\alpha}}{f}$ is called \emph{natural} 
  if $(\wprod{\vec{\alpha}}{f})(n) \in \nat$ for all $n \in \nat$. 
\end{definition}

In what follows, all weighted products $\wprod{\vec{\alpha}}{f}$ that we consider are assumed to be natural.

\begin{example}
  Let $f(n) = n$ for all $n \in \nat$, 
  and $\vec{\alpha} = \tup{\alpha_1,\alpha_2}$
  with $\alpha_1 = \tup{1,2,3,4}$, $\alpha_2 = \tup{0,1,1}$.
  Interpreting the functions $f$ and $\wprod{\vec{\alpha}}{f}$ as sequences,
  the computation of $\wprod{\vec{\alpha}}{f}$ can be visualised as follows:
  \begin{center}
    \begin{tikzpicture}
      \node at (-.8cm,0) [anchor=east] {$f$};
      \node at (11*.9cm,0) [anchor=east] {$\cdots$};
      \foreach \i in {0,1,2,3,4,5,6,7,8,9} {
        \node (\i) at (\i*.9cm,0) {\i};
      }
      \node at (-.8cm,-1.2cm) [anchor=east] {$\wprod{\vec{\alpha}}{f}$};
      \node at (11*.9cm,-1.2cm) [anchor=east] {$\cdots$};
      \foreach \i/\x/\v in {0/1/12,1/3.5/5,2/6/42,3/8.5/10} {
        \node (s\i) at (\x*.9cm,-1.2cm) {\v};
      }
      \begin{scope}[inner sep=0,->,nodes={scale=.8}]
      \draw (0) -- node [xshift=3mm,pos=.3] {$\times1$} (s0); 
      \draw (1) -- node [xshift=2.5mm,pos=.3] {$\times2$} (s0); 
      \draw (2) -- node [xshift=3mm,pos=.3] {$\times3$} node [xshift=3mm,pos=.8] {$+4$} (s0); 
      \draw (3) -- node [xshift=3mm,pos=.3] {$\times0$} (s1); 
      \draw (4) -- node [xshift=3mm,pos=.3] {$\times1$} node [xshift=3mm,pos=.8] {$+1$} (s1); 
      \draw (5) -- node [xshift=3mm,pos=.3] {$\times1$} (s2); 
      \draw (6) -- node [xshift=2.5mm,pos=.3] {$\times2$} (s2); 
      \draw (7) -- node [xshift=3mm,pos=.3] {$\times3$} node [xshift=3mm,pos=.8] {$+4$} (s2); 
      \draw (8) -- node [xshift=3mm,pos=.3] {$\times0$} (s3); 
      \draw (9) -- node [xshift=3mm,pos=.3] {$\times1$} node [xshift=3mm,pos=.8] {$+1$} (s3); 
      \end{scope}
    \end{tikzpicture}
  \end{center}
  Thus, for $n = 0,1,2,3\ldots$\,, $(\wprod{\vec{\alpha}}{f})(n)$ takes the values $12,5,42,10,\ldots$\,. 
\end{example}

\begin{lemma}\label{lem:wprod:nth}
  Let $\vec{\alpha}$ be an $m$-tuple of weights ($m > 0$), and let $f : \nat \to \nat$.
  For all $n \in \nat$ we have 
  $(\wprod{\vec{\alpha}}{f})(n) = \wof{\alpha_r}{\shift{t}{f} }$
  where $q,r \in \nat$ with $r < m$ are such that $n = qm + r$,
  and $t = q \cdot \sum_{j = 0}^{m-1}(\length{\alpha_j} - 1) + \sum_{j = 0}^{r-1}(\length{\alpha_j} - 1)$.
\end{lemma}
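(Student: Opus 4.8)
The plan is to prove the identity by induction on $n$, keeping the $m$-tuple of weights $\vec{\alpha}$ and the function $f : \nat \to \nat$ universally quantified throughout, so that in the induction step the induction hypothesis can be instantiated at the rotated tuple $\vec{\alpha}'$ and the shifted function $\shift{\length{\alpha_0}-1}{f}$, exactly as these appear in the recursive clause of Definition~\ref{def:weighted:product}. A useful preliminary observation is that, writing $S = \sum_{j=0}^{m-1}(\length{\alpha_j}-1)$, rotation only permutes the entries of $\vec{\alpha}$, so $S = \sum_{j=0}^{m-1}(\length{\alpha'_j}-1)$ as well; this lets the two cases of the induction step collapse to short arithmetic.

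For the base case $n=0$ we have $q=r=0$, hence $t=0$, and the claim reduces to $(\wprod{\vec{\alpha}}{f})(0) = \wof{\alpha_0}{f}$ (using $\shift{0}{f}=f$), which is precisely the first defining equation of the weighted product. For the induction step, I would start from the second defining equation, $(\wprod{\vec{\alpha}}{f})(n+1) = (\wprod{\vec{\alpha}'}{\shift{\length{\alpha_0}-1}{f}})(n)$, apply the induction hypothesis to the right-hand side, and use that iterated shifts compose, $\shift{s}{\shift{\length{\alpha_0}-1}{f}} = \shift{s+\length{\alpha_0}-1}{f}$. Writing $n+1 = qm+r$ with $r<m$, there are two cases. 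If $r\ge 1$, then $n = qm+(r-1)$, so the data the induction hypothesis attaches to $n$ for the tuple $\vec{\alpha}'$ are quotient $q$, remainder $r-1$, and shift $t' = qS + \sum_{j=1}^{r-1}(\length{\alpha_j}-1)$; since $\alpha'_{r-1} = \alpha_r$, a one-line computation gives $t' + \length{\alpha_0}-1 = qS + \sum_{j=0}^{r-1}(\length{\alpha_j}-1) = t$. If $r=0$, then $q\ge 1$ and $n = (q-1)m + (m-1)$, so the data are quotient $q-1$, remainder $m-1$, and shift $t' = (q-1)S + \sum_{j=1}^{m-1}(\length{\alpha_j}-1)$; here $\alpha'_{m-1} = \alpha_0 = \alpha_r$ and $t' + \length{\alpha_0}-1 = (q-1)S + S = qS = t$. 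In both cases the right-hand side equals $\wof{\alpha_r}{\shift{t}{f}}$, which closes the induction.

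I expect the only real work to be this index- and shift-bookkeeping in the induction step, and in particular the wrap-around case $r=0$, where passing from $n+1$ to $n$ forces the quotient to drop by one and the remainder to jump up to $m-1$. Once the rotation-invariance of $S$ is in hand, both cases reduce to elementary manipulations of finite sums, and there is no other obstacle; the rest of the argument is just unfolding the recursion and the additive composition of shifts.
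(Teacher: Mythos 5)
Your proposal is correct and follows essentially the same route as the paper: induction on $n$ with $\vec{\alpha}$ and $f$ universally quantified so the hypothesis can be applied to $\vec{\alpha}'$ and $\shift{\length{\alpha_0}-1}{f}$, with the same two-case split on whether the remainder wraps around. The bookkeeping in both cases checks out, including the rotation-invariance of $S$ and the identification $\alpha'_{r-1}=\alpha_r$ (resp.\ $\alpha'_{m-1}=\alpha_0$).
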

\begin{proof}
  By induction on $n \in \nat$ we show that, 
  for all $f : \nat \to \nat$, and all $m$-tuples of weights~$\vec{\alpha}$, 
  we have
  \begin{align*}
    (\wprod{\vec{\alpha}}{f})(n) = \wof{\alpha_r}{\shift{t}{f}}
    &&&\text{where $n = qm + r$, and $q,r \in \nat$ with $r < m$,}\\
    &&&\text{and $t = q \cdot \sum_{i = 0}^{m-1}(\length{\alpha_i} - 1) + \sum_{i = 0}^{r-1}(\length{\alpha_i} - 1)$.}
  \end{align*}
  The base case, $n = 0$ follows directly from the definition.
  For the induction step let $\vec{\alpha}$ and $f : \nat \to \nat$ be arbitrary,
  let $n = n' + 1$, $q' = \floor{n'/m}$, and $r' = n' - q'm$.
  We abbreviate $k_i = \length{\alpha_i} - 1$ ($i \in \nat_{<m}$) 
  and $s_\ell = \sum_{i=0}^{\ell-1}$ ($\ell \in \nat_{\le m}$).
  By Definition~\ref{def:weighted:product} and the induction hypothesis we obtain
  \begin{align*}
    (\wprod{\vec{\alpha}}{f})(n) 
    = (\wprod{\vec{\alpha}'}{\shift{k_0}{f}})(n') 
    = \wof{\alpha_{r'+1}}{\shift{t'}{\shift{k_0}{f}}}
  \end{align*}
  with $t' = q' \cdot s_m + s_{r'+1} - k_0$, 
  and where addition in the subscript of $\alpha$ is modulo~$m$.
  We conclude $(\wprod{\vec{\alpha}}{f})(n) = \wof{\alpha_{r}}{\shift{t}{f}}$
  where $n = (q'+1)m$, $r = 0$ and $t = (q'+1) \cdot s_m$ if $r' = m - 1$,
  and $n = q'm + r$, $r = r' + 1$ and $t = q' \cdot s_m + r$ if $r' < m - 1$.
\end{proof}

\begin{lemma}\label{lem:wprod:not:botdeg}
  Let $f : \nat \to \nat$.
  If $\seq{\wprod{\vec{\alpha}}{f}} \not\in \botdeg$, 
  then there exists $i \in \nat_{<\length{\vec{\alpha}}}$ such that $\alpha_i$ is a non-constant weight.
\end{lemma}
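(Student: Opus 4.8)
The plan is to prove the contrapositive: if every weight $\alpha_i$ (for $i \in \nat_{<\length{\vec{\alpha}}}$) is constant, then $\seq{\wprod{\vec{\alpha}}{f}} \in \botdeg$. Write $m = \length{\vec{\alpha}}$; we may assume $m \ge 1$, since $\wprod{\vec{\alpha}}{f}$ is only defined for a non-empty tuple of weights. Being constant means $\alpha_i = \tup{0,\ldots,0,b_i}$ for some $b_i$, so that $\wof{\alpha_i}{g} = b_i$ holds for every function $g : \nat \to \nat$, independently of $g$.

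First I would invoke Lemma~\ref{lem:wprod:nth}: for an arbitrary $n \in \nat$, writing $n = qm + r$ with $r < m$, we have $(\wprod{\vec{\alpha}}{f})(n) = \wof{\alpha_r}{\shift{t}{f}}$ for the $t$ specified there. Since $\alpha_r$ is constant, the right-hand side equals $b_r$, which depends only on $r = n \bmod m$, and neither on $q$ nor on which function is plugged in. Hence, setting $g = \wprod{\vec{\alpha}}{f}$, we obtain $g(n) = b_{n \bmod m}$ for all $n \in \nat$, and in particular $g(n+m) = g(n)$ for all $n$; that is, $g$ is periodic in the sense of Definition~\ref{def:periodic}, with $n_0 = 0$ and $p = m$.

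It remains to observe that $\seq{g}$ is then an ultimately periodic sequence. By the standing convention that all weighted products under consideration are natural, each value $b_r = g(r)$ lies in $\nat$, so $w := 1 0^{b_0} \, 1 0^{b_1} \cdots 1 0^{b_{m-1}}$ is a non-empty finite word over $\two$, and by Definition~\ref{def:seq} we get $\seq{g} = w^\omega$. Thus $\seq{g}$ is of the form $\emptyword \, w \, w \, w \cdots$ with $w \ne \emptyword$, so $\seq{\wprod{\vec{\alpha}}{f}} = \seq{g} \in \botdeg$, contradicting the hypothesis $\seq{\wprod{\vec{\alpha}}{f}} \notin \botdeg$. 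Therefore some $\alpha_i$ must be non-constant.

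I do not expect any genuinely hard step here: the argument is a direct unwinding of Lemma~\ref{lem:wprod:nth} together with the definitions of constant weight and of $\botdeg$. The only points that warrant a moment's care are that the index $t$ and the shift $\shift{t}{f}$ become irrelevant as soon as $\alpha_r$ is constant, so that the resulting block-length function depends only on $n \bmod m$; and that it is precisely the standing naturalness assumption on $\wprod{\vec{\alpha}}{f}$ that puts the $b_r$ in $\nat$, so that the periodic block pattern $w$ really does determine a sequence over $\two$.
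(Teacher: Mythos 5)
Your proof is correct and follows essentially the same route as the paper's: both argue the contrapositive, observing that if every $\alpha_i$ is constant then $(\wprod{\vec{\alpha}}{f})(n) = b_{n \bmod m}$ is periodic, whence $\seq{\wprod{\vec{\alpha}}{f}}$ is a periodic word $w^\omega$ and lies in $\botdeg$. You merely spell out the appeal to Lemma~\ref{lem:wprod:nth} and the naturalness convention, which the paper leaves implicit.
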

\begin{proof}
  Assume $\alpha_i$ is a constant weight for all $i \in \nat_{<\length{\vec{\alpha}}}$; 
  let $\alpha_i = \tup{0,0,\ldots,0,b_i}$. 
  Then 
  $\wprod{\vec{\alpha}}{f} : \nat \to \nat$ is periodic: for all $n\in\nat$, 
  we have $(\wprod{\vec{\alpha}}{f})(n) = b_i$ where $i \equiv n \pmod{\length{\vec{\alpha}}}$
  and $i \in \nat_{<\length{\alpha}}$. Hence $\seq{\wprod{\vec{\alpha}}{f}} \in \botdeg$.
\end{proof}

We define the operation~$\zip$~\cite{grab:endr:hend:klop:moss:2012} also known as `perfect shuffle'.

\begin{definition}\label{def:zip}
  Let $k \in \nat$. 
  For $i \in \nat_{<k}$, 
  let $f_i : \nat \to \nat$ be a function.
  We define the function $\zip_k(f_0,f_1,\ldots,f_{k-1}) : \nat \to \nat$ by 
  \begin{align*}
    \zip_k(f_0,f_1,\ldots,f_{k-1})(kn+i) = f_i(n) && (n \in \nat,\, i \in \nat_{<k}) \,.
  \end{align*}
\end{definition}

\begin{lemma}\label{lem:zip:periodic:modulo}
  Let $f_0,f_1,\ldots,f_{k-1} : \nat \to \nat$ be 
  such that
  $f_i$ ($i \in \nat_{<k}$) is ultimately periodic modulo every $m \ge 1$ (see Definition~\ref{def:spiralling}~(ii)).
  Then the function $\zip_k(f_0,f_1,\ldots,f_{k-1})$ is ultimately periodic modulo every $m \ge 1$.
\end{lemma}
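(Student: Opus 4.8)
The plan is to reduce periodicity modulo $m$ of the zip to simultaneous periodicity of the component functions modulo $m$. Fix $m \ge 1$. For each $i \in \nat_{<k}$, since $f_i$ is ultimately periodic modulo $m$, there are integers $n_i \ge 0$ and $p_i > 0$ with $f_i(n+p_i) \equiv f_i(n) \pmod m$ for all $n \ge n_i$. First I would pass to common data: let $N = \max_i n_i$ and $P = \operatorname{lcm}(p_0,\ldots,p_{k-1})$. Then for \emph{every} $i \in \nat_{<k}$ we have $f_i(n+P) \equiv f_i(n) \pmod m$ for all $n \ge N$, because $P$ is a multiple of each $p_i$ and a function periodic modulo $m$ with period $p_i$ (from some point on) is also periodic with any multiple of $p_i$ (from the same point on).

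Next I would compute a period for $g := \zip_k(f_0,\ldots,f_{k-1})$ directly from the defining equation $g(kn+i) = f_i(n)$. Set the candidate period $Q = kP$ and the candidate threshold $M = kN$. Take any index $\ell \ge M$ and write $\ell = kn+i$ with $i \in \nat_{<k}$; then $n \ge N$ since $kn \le \ell$ forces $n \ge \ell/k \ge N$. Now
\begin{align*}
  g(\ell + Q) = g\bigl(k(n+P) + i\bigr) = f_i(n+P) \equiv f_i(n) \pmod m = g(kn+i) = g(\ell)\,,
\end{align*}
using the common period $P$ for $f_i$ (applicable since $n \ge N$) in the middle congruence. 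Hence $g$ is ultimately periodic modulo $m$, with threshold $M = kN$ and period $Q = kP$. Since $m \ge 1$ was arbitrary, $\zip_k(f_0,\ldots,f_{k-1})$ is ultimately periodic modulo every $m \ge 1$, as claimed. \qed

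There is essentially no hard part here; the only thing to be careful about is the bookkeeping between the index $\ell$ of the zip and the index $n$ of the component function (the factor of $k$ in both threshold and period), and the standard observation that ultimate periodicity modulo $m$ with period $p$ upgrades to period $\operatorname{lcm}$ of several such $p$'s from a common threshold onward. Note that this lemma is exactly the ingredient needed to see that property~\ref{def:spiralling:item:ii} of spiralling functions is preserved under the zip operation; combined with the obvious fact that $\lim_{n\to\infty} \zip_k(f_0,\ldots,f_{k-1})(n) = \infty$ whenever every $f_i \to \infty$, one obtains that the zip of spiralling functions is spiralling.
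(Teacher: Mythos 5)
Your proof is correct and follows essentially the same route as the paper's: take the maximum of the thresholds and the lcm of the periods of the $f_i$, then multiply by $k$ to transfer periodicity through the index change $\ell = kn+i$. If anything, your bookkeeping is slightly more careful (using threshold $kN$ rather than $N$, which is what is actually needed to guarantee $n \ge N$).
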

\begin{proof}
  Fix an arbitrary $m \ge 1$.
  From the assumption we obtain, for every $i \in \nat_{<k}$, the existence of $n_i, p_i \in \nat$ with $p_i \ge 1$
  such that $f_i(n+p_i) \equiv f_i(n) \pmod{m}$ for all $n \ge n_0$.
  Define $n' = \max\{n_i\where i \in \nat_{<k}\}$
  and $p = \mrm{lcm} \{p_i\where i \in \nat_{<k}\}$.
  Let $f = \zip_k(f_0,f_1,\ldots,f_{k-1})$.
  Let $n \ge n'$ be arbitrary, 
  and let $q,i\in\nat$ with $i < k$ be such that $n = kq + i$.
  Then 
  we have 
  $f(n+kp) = f(k(q+p) + i) = f_i(q+p) \equiv f_i(q) \pmod{m}$,
  and 
  $f(n) = f(kq + i) = f_i(q)$, as desired.
  \qed
\end{proof}

\begin{lemma}\label{lem:wprod:zip}
  Let $f : \nat\to\nat$ be a function, 
  and let $\vec{\alpha}$ be an $m$-tuple of weights.
  Then there is a $m$-tuple of weights $\vec{\beta}$
  such that
  \begin{align*}
    \wprod{\vec{\alpha}}{f} & = \zip_m(g_0,g_1,\ldots,g_{m-1}) && \text{where $g_i = \wprod{\tup{\beta_i}}{f}$ for $i \in \nat_{<m}$.}
  \end{align*}
\end{lemma}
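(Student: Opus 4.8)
The plan is to read off the $m$-tuple $\vec{\beta}$ directly from Lemma~\ref{lem:wprod:nth}, which already tells us the value of $(\wprod{\vec{\alpha}}{f})(n)$ as a single weighted sum applied to a shift of $f$. Recall from that lemma that if $n = qm + r$ with $r < m$, then
\begin{align*}
  (\wprod{\vec{\alpha}}{f})(n) = \wof{\alpha_r}{\shift{t}{f}}, \qquad
  t = q \cdot S + s_r,
\end{align*}
where $S = \sum_{j=0}^{m-1}(\length{\alpha_j}-1)$ and $s_r = \sum_{j=0}^{r-1}(\length{\alpha_j}-1)$. Writing $\alpha_r = \tup{a_{r,0},\ldots,a_{r,\length{\alpha_r}-2},b_r}$, the right-hand side is $\sum_{\ell} a_{r,\ell}\, f(t+\ell) + b_r$. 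My first step is therefore to define, for each $i \in \nat_{<m}$, the weight $\beta_i$ of length $S + s_i + \length{\alpha_i}$ (equivalently, a weight acting on $f$ with the appropriate leading zeros) whose nonzero coefficients are exactly the $a_{i,\ell}$ placed at positions $s_i + \ell$ and whose constant term is $b_i$; that is, $\beta_i$ is $\alpha_i$ prefixed by $s_i$ zeros and extended so that $\wof{\beta_i}{f} = \wof{\alpha_i}{\shift{s_i}{f}}$ for every $f$. Then $g_i := \wprod{\tup{\beta_i}}{f}$ is a singleton weighted product, so by Lemma~\ref{lem:wprod:nth} (applied with a $1$-tuple) we get $g_i(n) = \wof{\beta_i}{\shift{n S_i}{f}}$ where $S_i = \length{\beta_i}-1 = S + s_i$.

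The second step is the verification that $\zip_m(g_0,\ldots,g_{m-1})$ agrees with $\wprod{\vec{\alpha}}{f}$. Plugging $n = qm + r$ into the definition of $\zip_m$ gives $\zip_m(g_0,\ldots,g_{m-1})(qm+r) = g_r(q) = \wof{\beta_r}{\shift{q S_r}{f}} = \wof{\alpha_r}{\shift{q S_r + s_r}{f}}$. Since $S_r = S$... wait — here is the subtlety: $S_i = S + s_i$ is \emph{not} the same for all $i$, so a naive choice of $\beta_i$ built by left-padding $\alpha_i$ does not make the shift amount come out to $qS + s_r$. The fix is to left-pad with $s_i$ zeros but also to right-pad so that every $\beta_i$ has the same length $S + 1$ (recall $S = \sum_j (\length{\alpha_j}-1) \ge s_i + (\length{\alpha_i}-1)$ for each $i$, since the missing summands are nonnegative), so that $S_i = S$ uniformly. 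With that choice, $g_r(q) = \wof{\beta_r}{\shift{qS}{f}} = \wof{\alpha_r}{\shift{qS + s_r}{f}} = \wof{\alpha_r}{\shift{t}{f}} = (\wprod{\vec{\alpha}}{f})(qm+r)$, which is exactly the claim. I should also note that each $\beta_i$ so defined is a legitimate weight, i.e.\ its non-constant coefficients are $\ge 0$, which holds because the coefficients of $\alpha_i$ are $\ge 0$ and the padding only introduces zeros.

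The only real obstacle is the bookkeeping just described: getting the lengths of the $\beta_i$ to line up so that the shift exponent in $g_r(q)$ is $qS + s_r$ rather than $q(S+s_r) + s_r$, which requires the uniform-length padding trick. Once that is settled, everything else is a direct substitution using Lemma~\ref{lem:wprod:nth}. I would present the proof by first stating the explicit definition of $\beta_i$ (as $\alpha_i$ with $s_i$ leading zeros inserted before its last component and enough trailing zeros inserted to reach total length $S+1$), then invoking Lemma~\ref{lem:wprod:nth} once for the singleton tuples $\tup{\beta_i}$ and once for $\vec{\alpha}$, and finally matching the two expressions termwise.
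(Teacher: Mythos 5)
Your proof is correct and follows essentially the same route as the paper: the paper also defines each $\beta_i$ of uniform length $s_m+1$ by placing the coefficients of $\alpha_i$ at positions $s_i+h$ and padding with zeros elsewhere, then verifies the identity by applying Lemma~\ref{lem:wprod:nth} to both $\vec{\alpha}$ and the singleton tuples $\tup{\beta_i}$. The ``uniform-length padding'' subtlety you flag is exactly the point of the paper's construction, so nothing is missing.
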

\begin{proof}
  Let $k_i = \length{\alpha_i}-1$ ($i \in \nat_{<m}$), 
  and $\alpha_i = \tup{a_{i,0},a_{i,1},\ldots,a_{i,k_i-1},b_i}$.
  For $\ell \in \nat_{\le m}$ we define $s_\ell = \sum_{i=0}^{\ell - 1} k_i$.
  For $i \in \nat_{<m}$, we define the weight $\beta_i$ of length 
  $s_m + 1$ by
  \begin{align*}
    (\beta_i)_{s_m} = b_i
    &&
    (\beta_i)_{s_j + h} =
    \begin{cases}
      a_{i,h} & \text{if $j = i$,} \\  
      0 & \text{if $j \ne i$}  
    \end{cases}
    && (j \in \nat_{<m},\, h \in \nat_{<k_i})\,.
  \end{align*}
  Let $g = \zip_m(g_0,g_1,\ldots,g_{m-1})$.
  Now it is just a matter of unfolding definitions to derive  that 
  $\wprod{\vec{\alpha}}{f} = g$.
  Fix an arbitrary $n \in \nat$, 
  and let $k,i\in\nat$ with $i < m$ be such that $n = km + i$.
  Then we have, using Lemma~\ref{lem:wprod:nth} twice, 
  $(\wprod{\vec{\alpha}}{f})(n) 
  = (\wof{\alpha_i}{\shift{k s_m + s_i}{f}})(n) 
  = a_{i,0} \cdot f(k s_m + s_i) + a_{i,1} \cdot f(k s_m + s_i + 1) + \cdots + a_{i,k_i-1} \cdot f(k s_m + s_i + k_i - 1) + b_i
  = (\beta_i)_{0} \cdot f(k s_m) + \cdots + (\beta_i)_{s_m - 1} \cdot f(k s_m + s_m - 1) + (\beta_i)_{s_m}
  = (\wof{\beta_i}{\shift{k s_m}{f}})(k)
  = (\wprod{\tup{\beta_i}}{f})(k) = g_i(k) = g(km+i) = g(n)$.
  \qed
\end{proof}

\begin{lemma}\label{lem:wprod:preserve:spiralling}
  Let $f:\nat\to\nat$ be spiralling, and let $\vec{\alpha}$ be a tuple of non-constant weights.
  Then $\wprod{\vec{\alpha}}{f}$ is spiralling.
\end{lemma}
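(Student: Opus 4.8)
The plan is to reduce the claim about the weighted product $\wprod{\vec{\alpha}}{f}$ to a statement about the single-weight functions $g_i = \wprod{\tup{\beta_i}}{f}$ produced by Lemma~\ref{lem:wprod:zip}, and then verify the two spiralling conditions for a zip of such functions. First I would invoke Lemma~\ref{lem:wprod:zip} to write $\wprod{\vec{\alpha}}{f} = \zip_m(g_0,\ldots,g_{m-1})$ where $g_i = \wprod{\tup{\beta_i}}{f}$. By Lemma~\ref{lem:wprod:nth} (with a one-tuple of weights), each $g_i$ is of the form $g_i(k) = \wof{\beta_i}{\shift{k\cdot(\length{\beta_i}-1)}{f}}$, i.e.\ a fixed non-negative-integer linear combination of finitely many values $f(k\cdot s + c_0), f(k\cdot s + c_1), \ldots$ of $f$, plus a rational constant. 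The key observation is that the new weights $\beta_i$ produced in the proof of Lemma~\ref{lem:wprod:zip} are again non-constant precisely because each $\alpha_i$ is non-constant (the $a_{i,h}$ block of $\beta_i$ is copied from $\alpha_i$), so each $g_i$ genuinely depends on at least one value of $f$ with a strictly positive coefficient.

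Next I would check the two conditions of Definition~\ref{def:spiralling} for each $g_i$ separately, and then lift them to the zip. For condition~(i): since $\beta_i$ is non-constant, $g_i(k)$ is bounded below by (a positive rational multiple of) $f(k s + c)$ for some fixed argument shift, for some term in the sum; as $k \to \infty$ this argument tends to infinity, and $\lim_{n\to\infty} f(n) = \infty$ gives $\lim_{k\to\infty} g_i(k) = \infty$. Hence $\lim_{n\to\infty} \zip_m(g_0,\ldots,g_{m-1})(n) = \infty$ as well, since on the residue class $n \equiv i \pmod m$ the value is $g_i(\lfloor n/m\rfloor)$ and all $m$ of the $g_i$ diverge. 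For condition~(ii): fix $m' \ge 1$; since $f$ is ultimately periodic modulo $m'$, and $g_i(k)$ is an integer linear combination of finitely many affine-in-$k$ shifts of $f$ plus an integer constant (one should note the constant part $b_i$ contributes, so one must pass to a common modulus handling any denominators — but actually we may assume $\wprod{\vec\alpha}{f}$ is natural, hence each $g_i$ is $\nat$-valued, so the constant is effectively an integer for the residues that occur), the function $k \mapsto g_i(k) \bmod m'$ is ultimately periodic: each term $k \mapsto f(ks+c) \bmod m'$ is ultimately periodic in $k$ (periodicity of $f$ mod $m'$ composed with an affine map in $k$), finite sums of ultimately periodic sequences are ultimately periodic, and reducing mod $m'$ preserves this. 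Then Lemma~\ref{lem:zip:periodic:modulo} applied to $g_0,\ldots,g_{m-1}$ yields that $\zip_m(g_0,\ldots,g_{m-1}) = \wprod{\vec{\alpha}}{f}$ is ultimately periodic modulo every $m' \ge 1$.

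The main obstacle I anticipate is the bookkeeping around rational versus integer coefficients and constants: weights live in $\rat$, so a priori $g_i(k) \bmod m'$ is not even well-defined. The clean way around this is to use the standing assumption (stated just before the lemma) that all weighted products under consideration are natural, so $\wprod{\vec\alpha}{f}$ and hence each $g_i$ takes values in $\nat$; then for the mod-$m'$ argument I work directly with the $\nat$-valued function $g_i$ and only need that it agrees, on all of $\nat$, with an explicit $\rat$-linear-in-$f$ expression. A slightly more careful route, avoiding even that subtlety, is to clear denominators: multiply $\vec\alpha$ by a suitable positive integer $d$ so that $\wprod{d\vec\alpha}{f} = d\cdot\wprod{\vec\alpha}{f}$ has integer coefficients and constants, prove that function is spiralling, and then observe that dividing a spiralling $\nat$-valued function by $d$ (when the result is still $\nat$-valued) preserves both spiralling conditions — condition~(i) trivially, and condition~(ii) because $h(k) \equiv h(k+p) \pmod{d m'}$ implies $h(k)/d \equiv h(k+p)/d \pmod{m'}$. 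Either way the genuinely mathematical content is light: it is the preservation of ``ultimately periodic modulo every modulus'' under affine reindexing, finite sums, and zip, all of which are already packaged (the last in Lemma~\ref{lem:zip:periodic:modulo}), together with divergence being inherited from a single positive-coefficient term.
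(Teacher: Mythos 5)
Your proposal is correct and follows essentially the same route as the paper's proof: decompose via Lemma~\ref{lem:wprod:zip} into a zip of single-weight products, handle each $g_i = \wprod{\tup{\beta_i}}{f}$ using the explicit form from Lemma~\ref{lem:wprod:nth} together with the denominator-clearing trick (working modulo $dm'$ in $f$), and conclude with Lemma~\ref{lem:zip:periodic:modulo}. The only differences are cosmetic: the paper proves the single-weight case first and then lifts it, whereas you start from the zip decomposition, and you spell out condition~(i) which the paper dismisses as easy.
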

\begin{proof}
  Let $f:\nat\to\nat$ be spiralling, and let $\vec{\alpha}$ be an $r$-tuple of non-constant weights.
  It is easy to see that 
  $\lim_{n \to \infty}(\wprod{\vec{\alpha}}{f})(n) = \infty$.

  To see that $g = \wprod{\vec{\alpha}}{f}$ is ultimately periodic modulo every $m \ge 1$,
  i.e., $\forall m\,.\exists n_0,p.\,\forall n \ge n_0\,. g(n) \equiv g(n+p) \pmod{m}$,
  we first we prove the claim for $r = 1$. 
  So $\vec{\alpha} = \tup{\alpha}$ is a tuple consisting of one non-constant weight.
  Without loss of generality, let $\alpha = \tup{a_0/d,a_1/d,\ldots,a_{k-1}/d,b/d}$, for some integer~$d > 0$.
  Let $g = \wprod{\tup{\alpha}}{f}$. 
  By Lemma~\ref{lem:wprod:nth} we have
  $g(n) = \sum_{j=0}^{k-1} \big((a_j/d) \cdot f(kn+j)\big) + b/d$, for all $n \in \nat$.
  Fix an arbitrary integer $m > 0$.
  By $f$ being spiralling there exist
  $n_0,p\in\nat$ with $p > 0$
  such that $f(n+p) \equiv f(n) \pmod{dm}$ for all $n \ge n_0$.
  This implies that $(a_j/d) \cdot f(k(n+p)+j) \equiv (a_j/d) \cdot f(kn+j) \pmod{m}$ for all $n \ge n_0$.
  Hence we obtain $g(n+p) \equiv g(n) \pmod{m}$.

  For the case $r > 1$ we reason as follows.
  By Lemma~\ref{lem:wprod:zip} there exists an $r$-tuple $\vec{\beta}$ such that 
  $g = \wprod{\vec{\alpha}}{f} = \zip_r(g_0,g_1,\ldots,g_{m-1})$,
  where $g_i = \wprod{\tup{\beta_i}}{f}$ ($i \in \nat_{<r}$).
  By the above argument (for $r = 1$), 
  we have that $g_i$ is ultimately periodic modulo every $m \ge 1$.
  We conclude by Lemma~\ref{lem:zip:periodic:modulo}.
  \qed
\end{proof}

We will show that weighted products give rise to a characterisation, up to equivalence~$\fstconv$, 
of functions realised by FSTs on the set of spiralling sequences, see Theorem~\ref{thm:transducts}.

\begin{lemma}\label{lem:wprod:FST}
  Let $f : \nat \to \nat$, and 
  $\vec{\alpha}$ a tuple of weights.
  If ${\wprod{\vec{\alpha}}{f}}$ is a natural weighted product, then we have
  $\seq{f} \fstred \seq{\wprod{\vec{\alpha}}{f}}$.
\end{lemma}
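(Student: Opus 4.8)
The plan is to construct an FST $T$ directly realising $\seq{f} \fstred \seq{\wprod{\vec{\alpha}}{f}}$, generalising the construction in the proof of Lemma~\ref{lem:transducts}. Write $\vec{\alpha} = \tup{\alpha_0,\ldots,\alpha_{m-1}}$ with $\alpha_i = \tup{a_{i,0},\ldots,a_{i,k_i-1},b_i}$ and $k_i = \length{\alpha_i}-1$. Clearing denominators via Lemma~\ref{lem:basic}~\ref{item:lem:basic:mult} (which does not change the degree), I may assume all $a_{i,h}$ and $b_i$ are natural numbers, so that reading a single input block $10^{f(n)}$ contributes, with multiplicity, the value $a_{i,h}\cdot f(n)$ to the exponent of one output block, plus a constant $b_i$. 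The key observation is that $(\wprod{\vec{\alpha}}{f})(n)$ for $n = qm+r$ depends, by Lemma~\ref{lem:wprod:nth}, on exactly the $k_r$ consecutive input values $f(t),\ldots,f(t+k_r-1)$ for the appropriate shift $t$; so the transducer must read $k_r$ input blocks, accumulate their (weighted) lengths, and emit a single output block $10^{(\wprod{\vec{\alpha}}{f})(n)}$.

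First I would set up the state space. The transducer keeps track of: (a) which of the $m$ output blocks it is currently assembling, i.e.\ the index $r$; (b) which of the $k_r$ input blocks within that group it is currently reading, i.e.\ a counter $h \in \nat_{<k_r}$; and (c) a residue counter that, while reading zeros of the $h$-th such input block, cycles modulo the current coefficient $a_{r,h}$ (or is idle when $a_{r,h}=0$), emitting one output $0$ every time it wraps around — exactly the "$h+1 = z$" trick of the Lemma~\ref{lem:transducts} construction, but with the per-position weight in place of $1/z$. On reading the separating $1$ that starts a new input block, the transducer either advances $h$ (staying within the same output group, emitting nothing or finishing off a constant contribution) or, when $h$ has reached $k_r-1$, emits the terminating structure of the current output block together with the leading $1$ and all the constant summand $b_r$, then moves to output group $r+1 \bmod m$ and resets $h$ to $0$. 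The initial state should be chosen so that the very first $1$ of $\seq{f}$ opens output block $0$; as in Lemma~\ref{lem:transducts} this means starting "one step before", in a state indexed by $m-1$.

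The main step is then the verification that $T(\seq{f}) = \seq{\wprod{\vec{\alpha}}{f}}$. This I would do by the same bookkeeping as in Lemma~\ref{lem:transducts}: track the sequence of states entered just before each input $1$, show by induction on $n$ that after consuming the first $n$ input blocks the transducer has output exactly $\prod_{\ell<n'} 10^{(\wprod{\vec{\alpha}}{f})(\ell)}$ for the appropriate $n'$ (with $n' \approx n/\!\max_i k_i$, governed by the $s_m$-arithmetic of Lemma~\ref{lem:wprod:nth}), plus a partial block whose current zero-count equals the partial weighted sum accumulated so far. The equation $(\wprod{\vec{\alpha}}{f})(n) = \wof{\alpha_r}{\shift{t}{f}}$ from Lemma~\ref{lem:wprod:nth} is exactly what closes the induction, so I would quote it rather than re-derive the index arithmetic.

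I expect the main obstacle to be purely organisational rather than mathematical: managing the triple index $(r,h,\text{residue})$ and the boundary behaviour when some $k_r$ equals $1$, when some coefficient $a_{r,h}$ is $0$, or when $m=1$, without the state set blowing up or the output getting misaligned by one block. A clean way to sidestep part of this is to first prove the lemma for $m=1$ and a single weight $\alpha = \tup{a_0,\ldots,a_{k-1},b}$ — where the transducer is essentially a $k$-phase cycle of "residue-mod-$a_h$" gadgets — and then derive the general case from Lemma~\ref{lem:wprod:zip}, which expresses $\wprod{\vec{\alpha}}{f}$ as $\zip_m(g_0,\ldots,g_{m-1})$ with each $g_i = \wprod{\tup{\beta_i}}{f}$ a single-weight product; composing the $m$ single-weight transducers and interleaving their outputs block by block is then a routine FST construction. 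Either route works; the single-weight-plus-$\zip$ decomposition is the one I would write up, since it isolates the one genuinely new gadget and reuses machinery already in the paper.
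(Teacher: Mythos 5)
Your overall architecture is the paper's: a direct FST whose states track the output-group index, the position within the group, and a carry, verified by induction via Lemma~\ref{lem:wprod:nth}. But two concrete steps need repair. First, the reduction ``clearing denominators via Lemma~\ref{lem:basic}~\ref{item:lem:basic:mult}, I may assume all $a_{i,h}$ and $b_i$ are natural numbers'' fails for the constants: a weight only requires $a_0,\ldots,a_{k-1}\ge 0$, so the $b_i$ may be negative, and multiplying by a positive integer does not change their sign. You would additionally need a $y$-shift (Lemma~\ref{lem:basic}~\ref{item:lem:basic:yshift}) adding a sufficiently large constant to every $b_i$ --- or else handle negative constants inside the automaton, which is what the paper does: the carry $h$ of a state $q_{i,j}^{h}$ ranges over $\min(0,b_i)\le h<d_i$ and the emitted exponent is truncated via $\idifnn{\cdot}$, so a negative constant is absorbed by the first few weighted zeros rather than ``emitted''. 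Second, your one genuinely new gadget is described backwards: a counter that cycles modulo the coefficient $a_{r,h}$ and emits one output $0$ per wrap-around computes $\floor{f(n)/a_{r,h}}$, i.e.\ \emph{division} by $a_{r,h}$. Once the coefficients are integers, multiplication needs no counter at all --- each input $0$ simply emits $0^{a_{r,h}}$; a modular counter is needed only for the denominators $d_i$ (the paper's zero-transition emits $0^{e}$ with $e=\floor{\idifnn{h+a_{i,j}}/d_i}$ and carries $h'=h+a_{i,j}-ed_i$).

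Your fallback route via Lemma~\ref{lem:wprod:zip} is viable but not for the reason you give: ``composing the $m$ single-weight transducers and interleaving their outputs block by block'' is not a routine FST construction, since an FST cannot buffer unboundedly much output while waiting for the interleaving to come due. It works here only because the weights $\beta_i$ produced by Lemma~\ref{lem:wprod:zip} have their nonzero coefficients on pairwise disjoint, consecutively ordered segments of each group of $s_m$ input blocks, so the output blocks $10^{g_0(i)},\ldots,10^{g_{m-1}(i)}$ can be emitted online in the correct order --- at which point the ``composed'' machine collapses back into the paper's direct construction.
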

\begin{proof}
  Let $m = \length{\vec{\alpha}}$, 
  $k_i = \length{\alpha_i} - 1$,
  and $\alpha_i = \tup{a_{i,0}/d_i,a_{i,1}/d_i,\ldots,$ $a_{i,k_i-1}/d_i,b_i/d_i}$
  with $a_{i,j},d_i \in \nat$ and $b_i \in \mbb{Z}$ for all $i \in \nat_{\lt m}$ and $j \in \nat_{<k_i}$
  (clearly weights can always be brought into this form).

  We define $T = \tup{Q,q_{m-1,k_{m-1}-1}^{0},\delta,\lambda}$
  consisting of states $q_{i,j}^{h}$ for every $i \in \nat_{<m}$, $j \in \nat_{<k_i}$, 
  and $h$ such that $\min(0,b_i) \le h \lt d_i$.
  The superscript $h$ ($\min(0,b)_i \le h \lt d_i$) in a state $q_{i,j}^h$
  indicates the amount $h/d_i$ of zeros that still has to be consumed/produced.
  The transition and output functions
  $\pair{\delta}{\lambda} : Q \times \two \to Q \times \two^*$ of $T$ 
  are defined as follows; 
  let $\sidifnn : \zz \to \nat$ be defined by $\idifnn{z} = z$ if $z \ge 0$ and $\idifnn{z} = 0$ otherwise.
  \begin{align*}
    \pair{\delta}{\lambda}(q_{i,j}^{h},0) & = \pair{q_{i,j}^{h'}}{0^e} 
    && \text{where $e = \floor{\idifnn{h + a_{i,j}}/d_i}$, $h' = h + a_{i,j} - e d_i$}\\
    \pair{\delta}{\lambda}(q_{i,j}^{h},1) & = \pair{q_{i,j+1}^{h}}{\emptyword}
    && (j < k_i - 1) \\
    \pair{\delta}{\lambda}(q_{i,k_i-1}^{h},1) & = \pair{q_{i',0}^{h'}}{1 0^e} 
    && \text{where $e = \floor{\idifnn{b_{i'}}/d_{i'}}$, $h' = b_{i'} - e d_{i'}$\,,}
  \end{align*}
  where 
  $i' = i+1 \bmod{m}$.
%

  We now show $\seq{f} \fstred_T \seq{\wprod{\vec{\alpha}}{f}}$.
  For $i \in \nat_{\le m}$, define $s_i = k_0 + k_1 + \cdots + k_{i-1}$. 
  We fix arbitrary $i \in \nat_{<m}$ and $n \in \nat$ with $n \equiv s_i \pmod{s_m}$.
  After reading the prefix $10^{f(0)} \cdots 10^{f(n-1)} 1$ of $\seq{f}$,
  $T$ is in state $q_{i,0}^{b_i \bmod{d_i}}$.
  Let $u$ denote the prefix of $\shift{}{\seq{\shift{n}{f}}}$  
  of the form $u = 0^{f(n)} 1 0^{f(n+1)} \cdots 1 0^{f(n+k_i-1)} 1$.
  We show that 
  \begin{align}
    \pair{\delta}{\lambda}(q_{i,0}^{b_i - e d_i},u) = \pair{q_{i',0}^{b_{i'} - e' d_{i'}}}{0^{\wof{\alpha_i}{\shift{n}{f}}-e} 1 0^{e'}}
    \label{eq:lem:wprod:FST:i}
  \end{align}
  where $i' = i+1 \bmod{m}$, 
  $e = \floor{\idifnn{b_{i}}/d_{i}}$, and
  $e' = \floor{\idifnn{b_{i'}}/d_{i'}}$.
  By the (implicit) assumption that $(\wprod{\vec{\alpha}}{f})(n) \in \nat$ for all $n \in \nat$,
  we have $\wof{\alpha_i}{\shift{n}{f} } \in \nat$.
  Also note that
  $\wof{\alpha_i}{\shift{n}{f}} = (b_i + \sum_{j=0}^{k_i-1}a_{i,j} \cdot f(n + j))/d_i$.

  For $j \in \nat_{< k_i}$, 
  we define $h_j, e_j \in \nat$ with $\min(0,b_i) \le h_j \lt d_i$ as follows:
  \begin{align*}
      h_0 & = b_i - e d_i \\
      h_{j+1} & = h_j + a_{i,j} \cdot f(n+j)  - e_j \cdot d_i && (j \lt k_i - 1) \\
      e_{j} & = \floor{\frac{h_j + a_{i,j} \cdot f(n+j)}{d_i}} \,.
  \end{align*}
  Then we have 
  \begin{align*}
    \pair{\delta}{\lambda}(q_{i,j}^{h_j},0^{f(n+j)} 1) & = \pair{q_{i,j+1}^{h_{j+1}}}{0^{e_{j}}}
    && \text{for $j < k_i - 1$, and} \\
    \pair{\delta}{\lambda}(q_{i,k_i-1}^{h_{k_i-1}},0^{f(n+k_i-1)} 1) & = \pair{q_{i',0}^{b_{i'}}}{0^{e_{k_i-1}} 1}
    && \text{where $i' = i+1 \bmod{m}$.}
  \end{align*}
  Moreover, as we have
  \begin{align*}
    \sum_{j = 0}^{k_i-1} e_j 
    = \floor{\frac{h_0 + \sum_{j=0}^{k_i-1}a_{i,j} \cdot f(n+j)}{d_i}} 
    = \wof{\alpha}{\shift{n}{f}} - e \,,
  \end{align*}
  we conclude that \eqref{eq:lem:wprod:FST:i} holds.
  \qed
\end{proof}

\begin{definition}\label{def:doubleprod}
  Let $f : \nat \to \nat$ be a function, and, for some $m > 0$, let $\vec{\alpha}$ be an $m$-tuple of weights
  such that $\wprod{\vec{\alpha}}{f}$ is a natural weighted product.
  Let $\vec{p}$ and~$\vec{c}$\, be $m$-tuples of finite words. 
  We define the sequence $\dpf{f,\vec{\alpha},\vec{p},\vec{c}} \in \str{\two}$ by
  \begin{align*}
    \dpf{f,\vec{\alpha},\vec{p},\vec{c}}
    = \prod_{i = 0}^{\infty} \prod_{j = 0}^{m-1} p_j \, c_j^{\cyc{i}{j}}
    && \text{where} &&
    \cyc{i}{j} = (\wprod{\vec{\alpha}}{f})(mi + j) \,.
  \end{align*}
\end{definition}

Note that $\seq{f}$ can also be cast into this notation:
$\seq{f} = \dpf{f,\tup{\tup{1,0}},\tup{1},\tup{0}}$.
For the following lemma we recall that 
for a tuple $\vec{a} = \tup{a_0,a_1,\ldots,a_{k-1}}$,
we write $\vec{a}'$ for the rotation~$\tup{a_1,\ldots,a_{k-1},a_0}$.

\begin{lemma}\label{lem:doubleprod:rotation}
  Let $f$, $\vec{\alpha}$, $\vec{p}$, 
  $\vec{c}$ be as in Definition~\ref{def:doubleprod}. 
  We have
  $\dpf{f,\vec{\alpha},\vec{p},\vec{c}} = 
  p_0 c_0^{\wof{\alpha_0}{f}} \cdot \dpf{\shift{\length{\alpha_0}-1}{f},\vec{\alpha}',\vec{p}',\vec{c}'}$.
\end{lemma}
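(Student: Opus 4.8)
The plan is to unfold both sides of the claimed equation according to the definitions and check that the two infinite products agree factor by factor. The statement relates the double product $\dpf{f,\vec{\alpha},\vec{p},\vec{c}}$ to $p_0 c_0^{\wof{\alpha_0}{f}}$ times $\dpf{\shift{\length{\alpha_0}-1}{f},\vec{\alpha}',\vec{p}',\vec{c}'}$, so essentially this is the ``peel off the first factor'' identity for the iterated product in Definition~\ref{def:doubleprod}, but complicated by the fact that the inner product runs over $j \in \nat_{<m}$ and the recursion in the definition of $\wprod{\vec{\alpha}}{f}$ shifts $f$ by $\length{\alpha_0}-1$ each time one rotates $\vec{\alpha}$.

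First I would write out $\dpf{f,\vec{\alpha},\vec{p},\vec{c}} = \prod_{i=0}^{\infty}\prod_{j=0}^{m-1} p_j c_j^{\cyc{i}{j}}$ with $\cyc{i}{j} = (\wprod{\vec{\alpha}}{f})(mi+j)$, and split off the very first factor, which is the $i=0$, $j=0$ term $p_0 c_0^{(\wprod{\vec{\alpha}}{f})(0)}$. By the base case of Definition~\ref{def:weighted:product}, $(\wprod{\vec{\alpha}}{f})(0) = \alpha_0 \cdot f = \wof{\alpha_0}{f}$, which matches the stated prefix $p_0 c_0^{\wof{\alpha_0}{f}}$. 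Then I need to show that the rest, $\prod$ over all remaining $(i,j)$ in the lexicographic order (i.e. $(0,1),(0,2),\dots,(0,m-1),(1,0),(1,1),\dots$), equals $\dpf{\shift{\length{\alpha_0}-1}{f},\vec{\alpha}',\vec{p}',\vec{c}'}$. Expanding the latter by Definition~\ref{def:doubleprod}, it is $\prod_{i=0}^{\infty}\prod_{j=0}^{m-1} p'_j (c'_j)^{(\wprod{\vec{\alpha}'}{\shift{\length{\alpha_0}-1}{f}})(mi+j)}$, and since $\vec{p}' = \tup{p_1,\dots,p_{m-1},p_0}$, $\vec{c}' = \tup{c_1,\dots,c_{m-1},c_0}$ are the rotations, reindexing $j$ by $j+1 \bmod m$ lines up the word factors correctly with the original sequence $(0,1),(0,2),\dots$.

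The crux is therefore a purely arithmetic reindexing identity: for every $n \ge 1$, the exponent attached to the $(n-1)$-th factor (counting from $0$) of the right-hand side equals $(\wprod{\vec{\alpha}}{f})(n)$. Concretely, writing $n = qm + r$ with $0 \le r < m$, I want $(\wprod{\vec{\alpha}'}{\shift{\length{\alpha_0}-1}{f}})(n-1) = (\wprod{\vec{\alpha}}{f})(n)$ for $n \ge 1$. This is exactly the recursion step $(\wprod{\vec{\alpha}}{f})(n+1) = (\wprod{\vec{\alpha}'}{\shift{\length{\alpha_0}-1}{f}})(n)$ from Definition~\ref{def:weighted:product}, applied with $n+1$ in place of $n$; so the identity is immediate once the bookkeeping between ``$(n-1)$-th factor of the rotated product'' and ``$n$-th value of the original product'' is set up correctly. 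Alternatively, I could bypass the recursion and use the closed form of Lemma~\ref{lem:wprod:nth}, verifying that the shift amount $t$ for index $n = qm+r$ under $\vec{\alpha}$ and the shift amount for index $n-1$ under $\vec{\alpha}'$ with base function $\shift{\length{\alpha_0}-1}{f}$ differ by exactly $\length{\alpha_0}-1$, and that the selected weight $\alpha_r$ is the same in both cases.

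The main obstacle is purely notational rather than mathematical: one must be careful about the two different roles of rotation (of the weight-tuple $\vec{\alpha}$ versus of the word-tuples $\vec{p},\vec{c}$) and make sure that peeling off the $i=0,j=0$ factor on the left corresponds to starting the right-hand product at its $i=0,j=0$ factor with the index $j$ shifted by one modulo $m$ — in particular that when $m = 1$ the ``rotation'' is the identity and the shift of $f$ is by $\length{\alpha_0}-1$, so the identity degenerates gracefully. Once the indices are aligned, the proof is a one-line appeal to the defining recursion of $\wprod{\vec{\alpha}}{f}$ (or to Lemma~\ref{lem:wprod:nth}), so I would keep the write-up short: state the reindexing, invoke Definition~\ref{def:weighted:product}, and conclude. \qed
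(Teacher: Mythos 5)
Your proposal is correct and follows essentially the same route as the paper's proof: peel off the $i=0$, $j=0$ factor $p_0c_0^{\wof{\alpha_0}{f}}$ and verify, via the defining recursion $(\wprod{\vec{\alpha}}{f})(n+1) = (\wprod{\vec{\alpha}'}{\shift{\length{\alpha_0}-1}{f}})(n)$, that the remaining exponents coincide with those of the rotated double product. The paper packages the same reindexing as an identity $\cycp{i}{j} = \psi(i,j)$ with a case split on $j < m-1$ versus $j = m-1$, which is exactly your shift-by-one bookkeeping.
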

\begin{proof}
  This is a straightforward calculation.
  Let $\length{\vec{\alpha}} = \length{\vec{p}} = \length{\vec{c}} = m$,
  and let $\cyc{i}{j} = (\wprod{\vec{\alpha}}{f})(mi+j)$.
  For $i \in \nat$ and $j \in \nat_{<m}$ we define $\cycp{i}{j}$ and $\psi(i,j)$ by
  \begin{align*}
    \cycp{i}{j} &= 
    \begin{cases}
      \cyc{i}{j+1} & \text{if $j < m-1$,} \\
      \cyc{i+1}{0} & \text{if $j = m-1$,}
    \end{cases}
    \\
    \psi(i,j) & = (\wprod{\vec{\alpha}'}{\shift{\length{\alpha_0} - 1}{f}})(mi+j)
  \end{align*}
  First we note that $\cycp{i}{j} = \psi(i,j)$ for all $i \in \nat$ and $j \in \nat_{<m}$:
  Let $i \in \nat$ and $j \in \nat_{<m}$ be arbitrary.
  If $j < m - 1$, then 
  $\cycp{i}{j} = \cyc{i}{j+1} = (\wprod{\vec{\alpha}}{f})(mi+j+1) 
  = (\wprod{\vec{\alpha}'}{\shift{\length{\alpha_0} - 1}{f}})(mi+j) = \psi(i,j)$ by Definition~\ref{def:weighted:product}.
  For the case $j = m - 1$, we find 
  $\cycp{i}{m-1} = \cyc{i+1}{0} = (\wprod{\vec{\alpha}}{f})(m(i+1)) 
  = (\wprod{\vec{\alpha}'}{\shift{\length{\alpha_0} - 1}{f}})(mi+m-1) = \psi(i,m-1)$, 
  again by definition of weighted products. 
  Then we have
  \begin{align*}
    \dpf{f,\vec{\alpha},\vec{p},\vec{c}} 
    & = \prod_{i=0}^\infty \prod_{j=0}^{m-1} p_j c_j^{\cyc{i}{j}} \\
    & = p_0 c_0^{\cyc{0}{0}} \cdot \prod_{i=0}^\infty \prod_{j=0}^{m-1} p_{j+1} c_{j+1}^{\cycp{i}{j}} \\
    & = p_0 c_0^{\wof{\alpha_0}{f}} \cdot \prod_{i=0}^\infty \prod_{j=0}^{m-1} p_{j+1} c_{j+1}^{\psi(i,j)} \\
    & = p_0 c_0^{\wof{\alpha_0}{f}} \cdot \dpf{\shift{\length{\alpha_0}-1}{f},\vec{\alpha}',\vec{p}',\vec{c}'}
  \end{align*}
  where addition in the subscripts is computed modulo~$m$.
  \qed
\end{proof}

\begin{lemma}\label{lem:disambiguate}
  Let $f : \nat \to \nat$ be a spiralling function, 
  and let $\sigma \in \str{\two}$ be 
  such that $\seq{f} \fstred \sigma$ and $\sigma \not\in\botdeg$.
  Then there exist 
  $n_0, m \in \nat$, a word $w \in \two^*$,
  a tuple of weights~$\vec{\alpha}$,
  and tuples of words $\vec{p}$ and $\vec{c}$ 
  with $\length{\vec{\alpha}} = \length{\vec{p}} = \length{\vec{c}} = m > 0$
  such that:
  \begin{enumerate}
    \item 
      $\sigma = w \cdot \dpf{\shift{n_0}{f},\vec{\alpha},\vec{p},\vec{c}}$,
      \label{item:form}
    \item 
      $c_j^\omega \ne p_{j+1} c_{j+1}^\omega$ for every $j$ with $0 \le j < m-1$, 
      and $c_{m-1}^\omega \ne p_{0} c_{0}^\omega$, and
      \label{item:no:ambiguities}
    \item 
      $c_j \ne \emptyword$,
      and $\alpha_j$ is non-constant, for all $j \in \nat_{<m}$.
      \label{item:no:empty:cycles}
  \end{enumerate}
  
\end{lemma}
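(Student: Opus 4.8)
The plan is to start from the double-product representation supplied by Lemma~\ref{lem:transducts}, recast it in the $\sdpf$-notation, and then perform two rounds of clean-up: first eliminating empty cycles $c_j = \emptyword$ (and the associated constant weights), then iteratively merging factors across every transition ambiguity $c_j^\omega = p_{j+1}c_{j+1}^\omega$. Since $\sigma \fstred$-transduces from $\seq{f}$ with $f$ spiralling, Lemma~\ref{lem:transducts} gives integers $n_0,m,a_j,z$ and words $w,p_j,c_j$ with $\sigma = w \cdot \prod_{i}\prod_{j} p_j c_j^{\cyc{i}{j}}$ and $\cyc{i}{j} = (f(n_0+mi+j)-a_j)/z$. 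Setting $\alpha_j = \tup{0,\ldots,0,1/z,0,\ldots,0,-a_j/z}$ (the $1/z$ in a position chosen so that, via Lemma~\ref{lem:wprod:nth}, $(\wprod{\vec\alpha}{\shift{n_0}{f}})(mi+j) = f(n_0+mi+j)/z - a_j/z$), one gets $\sigma = w\cdot\dpf{\shift{n_0}{f},\vec\alpha,\vec p,\vec c}$, establishing item~\ref{item:form} in a provisional form; these $\alpha_j$ are non-constant. Here I would invoke Lemma~\ref{lem:doubleprod:rotation} to see that rotating $\vec\alpha,\vec p,\vec c$ and shifting $f$ only changes $\sigma$ by a finite prefix, so such rotations are harmless and can be absorbed into $w$ and a larger $n_0$.

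Next I handle condition~\ref{item:no:empty:cycles}. If some $c_j = \emptyword$, the $j$-th factor contributes only the fixed word $p_j$, so it can be folded into the neighbouring factor: after a rotation (Lemma~\ref{lem:doubleprod:rotation}) bringing index $j$ to the front, $p_j c_j^{\cyc{i}{j}} = p_j$ is a constant prefix that merges into $p_{j+1}$, shrinking $m$ by one and replacing $\alpha_{j+1}$ by a weight that is still non-constant (adding the constant weight $\alpha_j$ to a non-constant weight keeps it non-constant; the degenerate case where this empties the whole product is excluded because $\sigma\notin\botdeg$ — this is where the hypothesis $\sigma \notin \botdeg$ is first used, together with Lemma~\ref{lem:wprod:not:botdeg}, to guarantee at least one surviving non-constant weight, hence $m > 0$). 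Iterating terminates since $m$ strictly decreases. After this pass, all $c_j \ne \emptyword$; and since we never created constant weights in surviving positions, all $\alpha_j$ remain non-constant.

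The main work is condition~\ref{item:no:ambiguities}, removing transition ambiguities. Suppose $c_j^\omega = p_{j+1}c_{j+1}^\omega$ for some $j$ (indices mod $m$). Rotate so that this ambiguity sits at the $0$/$1$ boundary, i.e.\ we have $c_0^\omega = p_1 c_1^\omega$. Apply Lemma~\ref{lem:merge} with $u = c_0$, $v = p_1$, $w = c_1$: there is a word $x$ and naturals $a,b$ with $c_0^{M} p_1 c_1^{N} = p_1 x^{aM+bN}$ for all $M,N$. Taking $M = \cyc{i}{0}$ and $N = \cyc{i}{1}$, the two adjacent factors $p_0 c_0^{\cyc{i}{0}}\cdot p_1 c_1^{\cyc{i}{1}}$ collapse to $p_0 p_1 x^{a\cyc{i}{0} + b\cyc{i}{1}}$, i.e.\ a single factor with word-data $p_0p_1$ and $x$ and exponent $a\cyc{i}{0}+b\cyc{i}{1}$. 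Now $\cyc{i}{0} = (\wprod{\vec\alpha}{g})(mi)$ and $\cyc{i}{1} = (\wprod{\vec\alpha}{g})(mi+1)$ with $g = \shift{n_0}{f}$; using Lemma~\ref{lem:wprod:nth} both are evaluations of weights $\alpha_0,\alpha_1$ against shifts of $g$, so the linear combination $a\cyc{i}{0}+b\cyc{i}{1}$ is again of the form $(\wprod{\vec\gamma}{g'})(i\cdot(m-1) + 0)$ for a new weight $\gamma_0$ built by concatenating the coefficient lists of $a\alpha_0$ and $b\alpha_1$ with the appropriate shift gap, and $g' = g$ (no further shift needed since position $0$ stays at the start). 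This produces a new valid tuple of length $m-1$ with $\wprod{\vec\gamma}{g'}$ natural (the exponents are manifestly natural numbers). The new $\gamma_0$ is non-constant provided not both $a = 0$ and $\alpha_0,\alpha_1$ contribute nothing — but $a = |u|/\gcd = |c_0|/\gcd \ge 1$ since $c_0 \ne \emptyword$ after the previous pass, so $\gamma_0$ inherits non-constancy from $a\alpha_0$. The first surviving $c$-word is $x$, which is the suffix of $c_1$ of length $\gcd(|c_0|,|c_1|) \ge 1$, hence non-empty. Iterating this merge strictly decreases $m$ each time, so it terminates; the hypothesis $\sigma\notin\botdeg$ again (via Lemma~\ref{lem:wprod:not:botdeg}) guarantees we cannot collapse all the way to a purely periodic tail, so the process stops at some $m > 0$ with no remaining ambiguities, all $c_j\ne\emptyword$ and all $\alpha_j$ non-constant, and with $\sigma = w'\cdot\dpf{\shift{n_0'}{f},\vec\alpha,\vec p,\vec c}$ for suitably updated $w',n_0'$.

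The step I expect to be the main obstacle is the bookkeeping in the merging pass: showing that after collapsing two adjacent factors the exponent $a\cyc{i}{0}+b\cyc{i}{1}$ really is expressible as a single weighted product $\wprod{\vec\gamma}{g'}$ at the right indices, with the shift/gap structure of $\gamma_0$ matching Definition~\ref{def:weighted:product} exactly, and simultaneously checking that non-constancy and non-emptiness are preserved and that the whole thing doesn't degenerate — which is precisely where $\sigma\notin\botdeg$ has to be fed in. The delicate point is that a single merge step might in principle land us with $m = 0$ (the empty product), and one must rule this out by observing that $\sigma$ would then be ultimately periodic, contradicting the hypothesis; so the process halts at the first point where no ambiguity remains, with $m \ge 1$.
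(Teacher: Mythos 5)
Your proposal is correct and follows essentially the same route as the paper's proof: start from Lemma~\ref{lem:transducts}, recast as $\dpf{\shift{n_0}{f},\vec{\alpha},\vec{p},\vec{c}}$ with weights $\tup{1/z,-a_j/z}$, eliminate empty cycles and constant weights by absorbing them into neighbouring factors, and resolve each transition ambiguity via Lemma~\ref{lem:merge} by concatenating the scaled coefficient lists $a\alpha_h$ and $b\alpha_{h+1}$ into a single weight, using Lemma~\ref{lem:doubleprod:rotation} for the wrap-around case, with termination by decreasing $m$ and non-degeneracy from $\sigma\notin\botdeg$. The shift-index bookkeeping you flag as the main obstacle is exactly the verification the paper carries out (matching the shifts $t$ and $t'$ via Lemma~\ref{lem:wprod:nth}), and it goes through as you describe.
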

\begin{proof}
  By Lemma~\ref{lem:transducts}, 
  there exist $n_0, m, a_j, z \in \nat$ ($j \in \nat_{\lt m}$), 
  $w \in \two^*$, and $\vec{p}, \vec{c} \in (\two^*)^m$ such that 
  $\sigma = w \cdot \dpf{\shift{n_0}{f},\vec{\alpha},\vec{p},\vec{c}}$,
  where, for $j \in \nat_{\lt m}$, $\alpha_j$ is defined by
  $\alpha_j = \pair{\frac{1}{z}}{{-}\frac{a_j}{z}}$.
  
  We now repeatedly alter the tuples $\vec{\alpha}$, $\vec{p}$, $\vec{c}$
  until conditions~\ref{item:no:ambiguities} and~\ref{item:no:empty:cycles} are fulfilled
  while condition~\ref{item:form} is upheld.
  For this we let $n_0 \in \nat$, $w \in\two^*$, $m \in \nat$, $\vec{\alpha}$, $\vec{p}$, and $\vec{c}$ with 
  $\length{\vec{\alpha}} = \length{\vec{p}} = \length{\vec{c}} = m$
  be arbitrary such that \ref{item:form} holds. 
%

  First note that, if $m = 1$ and condition~\ref{item:no:ambiguities} or \ref{item:no:empty:cycles}
  are violated, then $\sigma \in \botdeg$, contradicting the assumption.

  In case \ref{item:no:ambiguities} does not hold, 
  consider the smallest $h \in \nat_{<m}$ 
  such that $c_h^\omega = p_{h+1} c_{h+1}^\omega$ 
  where addition in the subscripts is computed modulo~$m$.
  We assume $h < m-1$; the case $h = m-1$ proceeds analogously, 
  using Lemma~\ref{lem:doubleprod:rotation}.
  For $i \in \nat$ and $j \in \nat_{<m}$,
  we let $\cyc{i}{j} = (\wprod{\vec{\alpha}}{\shift{n_0}{f}})(mi + j)$.
  By Lemma~\ref{lem:merge} there are integers $a,b \ge 0$ and a word $x \in \two^*$
  such that $c_h^{\cyc{i}{h}} p_{h+1} c_{h+1}^{\cyc{i}{h+1}} = p_{h+1} x^{a \cyc{i}{h} + b \cyc{i}{h+1}}$~($\star$).
  We now define a tuple of weights $\vec{\beta}$, 
  and tuples of words $\vec{q}$ and $\vec{d}$, 
  with $\length{\vec{\beta}} = \length{\vec{q}} = \length{\vec{d}} = m - 1$,
  as follows: Let $j \in \nat_{<m-1}$.
  If $j < h$, then we define $q_j = p_j$, $d_j = c_j$, and $\beta_j = \alpha_j$.
  If $j > h$, we define $q_j = p_{j+1}$, $d_j = c_{j+1}$, and $\beta_j = \alpha_{j+1}$.
  If $j = h$, we define $q_j = p_h p_{h+1}$, $d_j = x$, and we let
  the weight $\beta_j$ be defined as follows:
  For $\alpha_h = \tup{r_0,r_1,\ldots,r_{k-1},e}$
  and $\alpha_{h+1} = \tup{r'_0,r'_1,\ldots,r'_{\ell-1},e'}$,
  let $\beta_h = \tup{a r_0, a r_1, \ldots, a r_{k-1}, b r'_0, b r'_1, \ldots, b r'_{\ell - 1}, a e + b e'}$.
%
  By definition of $\vec{q}$, $\vec{d}$, and $\vec{\beta}$,
  to verify $\dpf{\shift{n_0}{f},\vec{\alpha},\vec{p},\vec{c}} = \dpf{\shift{n_0}{f},\vec{\beta},\vec{q},\vec{d}}$,
  it suffices to check, for all $i \in \nat$,
  $p_h c_h^{\cyc{i}{h}} p_{h+1} c_{h+1}^{\cyc{i}{h+1}} = q_h d_h^{\cycp{i}{h}}$;
  here, for $i \in \nat$ and $j \in \{0,\ldots,m-2\}$,
  $\cycp{i}{j}$ is defined by $\cycp{i}{j} = (\wprod{\vec{\beta}}{\shift{n_0}{f}})((m-1)i+j)$.
  Fix $i \in \nat$.
  By definition of weighted products we have 
  $\cyc{i}{h} = \wof{\alpha_h}{\shift{t}{f}}$
  and $\cyc{i}{h+1} = \wof{\alpha_{h+1}}{\shift{t + k}{f}}$,
  for some $t \in \nat$ (see Lemma~\ref{lem:wprod:nth}).
  Also we have $\cycp{i}{h} = \beta_h \cdot \shift{t'}{f}$ for some $t' \in \nat$.
  By definition of $\vec{\beta}$ we obtain $t' = t$.
  It follows that 
  $\cycp{i}{h} = a \cdot \cyc{i}{h} + b \cdot \cyc{i}{h+1}$, 
  and we conclude by~($\star$). 
  Repeat the procedure with $\vec{\beta}$, $\vec{q}$, $\vec{d}$.

  In case \ref{item:no:empty:cycles} does not hold 
  because of $c_h = \emptyword$ for some $h \in \nat_{<m}$,
  we change $\alpha_h$ into a constant weight~$\tup{0,0,\ldots,0}$ 
  of length~$\length{\alpha_h}$. This clearly does not change~$\sigma$.
  Now consider the case that \ref{item:no:empty:cycles} does not hold 
  because $\alpha_h$ is constant.
  Then let $h \in \nat_{<m}$ be minimal with this property.
  We assume $h < m-1$. 
  The case $h = m-1$ proceeds analogously, again by Lemma~\ref{lem:doubleprod:rotation}.
  We now define a tuple of weights $\vec{\beta}$, 
  and tuples of words $\vec{q}$ and $\vec{d}$, 
  with $\length{\vec{\beta}} = \length{\vec{q}} = \length{\vec{d}} = m - 1$,
  as follows:
  For $j < h$, we define $\beta_j = \alpha_j$, $q_j = p_j$, and $d_j = c_j$.
  For $j > h$, we define $\beta_j = \alpha_{j+1}$, $q_j = p_{j+1}$, and $d_j = c_{j+1}$.
  For the case $j = h$, 
  let $\alpha_h = \tup{0,0,\ldots,0,e}$
  and $\alpha_{h+1} = \tup{r_0,r_1,\ldots,r_{\ell-1},e'}$.
  We define $q_j = p_h c_h^e p_{h+1}$, 
  $d_j = c_{h+1}$, and 
  $\beta_j = \tup{0,0,\ldots,0,r_0,r_1,\ldots,r_{\ell-1},e'}$ of length $\length{\alpha_h} + \length{\alpha_{h+1}} - 1$.
  The verification of $\dpf{\shift{n_0}{f},\vec{\alpha},\vec{p},\vec{c}} = \dpf{\shift{n_0}{f},\vec{\beta},\vec{q},\vec{d}}$
  is similar as above.
  Repeat the procedure with $\vec{\beta}$, $\vec{q}$, $\vec{d}$.
  \qed  
\end{proof}

For the proof of the following theorem we 
allow for a more liberal version of transducers.
Instead of input letters along the edges we now allow input \emph{words}.
Transitions of these transducers are of the form $q \stackrel{\pair{u}{v}|w}{\longrightarrow} q'$.
The idea is that this transition is taken if the automaton is in state~$q$ and
the input word is of the form $uv\tau$.
Then the automaton produces output $w$ and switches to state $q'$, consuming $u$ and continuing with $v\tau$.

\begin{definition}\label{def:LFST}
  An FST with \emph{look-ahead} (\lfst) is a tuple~$T = \tup{Q,q_0,D,\delta,\lambda}$
  where $Q$ is a finite set of states, $q_{0} \in Q$ is the initial state,
  the finite set $D \subseteq Q \times \two^{+} \times \two^{*}$ 
  is the input domain of the transition function
  $\delta : D \to Q$,
  and the output function $\lambda : D \to \two^*$, 
  satisfying the following condition:
  for all $q \in Q$, $u_1,u_2,v_1,v_2 \in \two^*$ if $u_1u_2$ is a prefix of $v_1v_2$ 
  and $\tup{q,u_1,u_2} \in D$ and $\tup{q,v_1,v_2} \in D$, then $u_1 = v_1$ and $u_2 = v_2$.

  We lift $\delta$ to a partial function $\delta^\star : Q \times \two^* \pto Q$ by
  $\delta^\star(q,\emptyword) = q$ and
  \begin{align*}
    \delta^\star(q,u_1u_2v) & = \delta^\star(\delta(q,u_1,u_2),u_2v)
    &&&& (\tup{q,u_1,u_2} \in D, v \in \two^*)\,.
  \end{align*}
  Similarly, we lift $\lambda$ to a partial function
  $\lambda^\star : Q \times \two^\infty \pto \two^\infty$ by
  $\lambda^\star(q,\emptyword) = \emptyword$ and
  \begin{align*}
    \lambda^\star(q,u_1u_2v) & = \lambda(q,u_1,u_2) \cdot \lambda^\star(\delta(q,u_1,u_2),u_2v)
    &&&& (\tup{q,u_1,u_2} \in D, v \in \two^\infty)\,.
  \end{align*}
  The partial function $T : \two^\infty \pto \two^\infty$ \emph{realised} by the \lfst~$T$
  is defined by $T(u) = \lambda^\star(q_0,u)$,
  for all $u \in \two^\infty$.
\end{definition}

These transducers can be simulated by FSTs.

\begin{lemma}\label{lem:LFST2FST}
  For every \lfst~$T$ there is an FST~$T'$ such that 
  for all $u \in \two^\infty$, 
  $T'(u) = T(u)$ whenever $T(u)$ is defined.
\end{lemma}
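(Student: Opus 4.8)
The plan is to convert an \lfst~$T = \tup{Q,q_0,D,\delta,\lambda}$ into an ordinary FST~$T'$ by having $T'$ simulate $T$ while carrying a bounded \emph{buffer} of as-yet-unconsumed input letters in its finite control. The key observation is that, because $D$ is finite, there is a bound $N = \max\{\length{u_1 u_2} \where \tup{q,u_1,u_2}\in D\}$ on the total length of the prefix any single transition of $T$ ever inspects. So $T'$ will have state set (roughly) $Q \times \two^{\le N}$, where a state $\tup{q,s}$ means ``$T$ is in state $q$ and $s$ is the input already read by $T'$ but not yet consumed by a transition of $T$''. On reading an input letter $a$, $T'$ moves to the ``tentative'' state $\tup{q,sa}$, and then repeatedly checks whether there is a (necessarily unique, by the determinism condition on $D$) triple $\tup{q,u_1,u_2}\in D$ with $u_1 u_2 \prefixof sa$; if so, it fires that transition, emitting $\lambda(q,u_1,u_2)$, updating the state to $\delta(q,u_1,u_2)$, and replacing the buffer by $sa$ with the prefix $u_1$ removed. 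This ``fire as long as possible'' step must be folded into the transition function of $T'$ so that each input letter triggers exactly one transition of $T'$; since each firing strictly shortens the buffer and the buffer length is bounded by $N$, only finitely many firings happen per input letter, so this is well-defined.

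First I would set up the bound $N$ and define $Q' = Q \times \two^{\le N}$ with initial state $\tup{q_0,\emptyword}$. Then I would define an auxiliary ``consume'' function that, given $\tup{q,s}$, checks whether some $\tup{q,u_1,u_2}\in D$ satisfies $u_1 u_2 \prefixof s$: if not, it returns $\tup{q,s}$ with empty output; if so, it returns the result of recursively consuming $\tup{\delta(q,u_1,u_2),\, s \text{ with prefix } u_1 \text{ removed}}$, prepending $\lambda(q,u_1,u_2)$ to the output. Uniqueness of the applicable triple follows from the condition in Definition~\ref{def:LFST}, so this is deterministic; termination follows because $\length{u_1}\ge 1$ (as $u_1 \in \two^+$) strictly decreases the buffer. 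I would also need to argue the buffer never overflows: a transition only fires once the buffer contains $u_1 u_2$ as a prefix, so before firing the buffer has length $<N$ only if no transition with a shorter inspected prefix applies — but in fact one should observe that whenever $\length{s} \ge N$, \emph{every} triple in $D$ (for the current state) that could ever apply has its inspected prefix within $s$, so if none applies the \lfst~$T$ itself is stuck, i.e. $T(u)$ is undefined, and then we don't care what $T'$ does; otherwise some transition fires and the buffer shrinks. Hence $T'$'s buffer stays in $\two^{\le N}$. Then $\pair{\delta'}{\lambda'}(\tup{q,s},a)$ is defined as: let $\tup{q,s}$ first be updated to the result of ``consume'' (this should really be maintained as an invariant, so that states reached are always already ``fully consumed''), then read $a$ into the buffer to get $\tup{q,sa}$, and apply ``consume'' to that.

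The correctness claim, $T'(u) = T(u)$ whenever $T(u)$ is defined, I would prove by showing the invariant: after $T'$ reads a finite prefix $u' \prefixof u$, $T'$ is in state $\tup{q,s}$ with output produced so far equal to $p$, where $u' = p'\cdot s$ with $p'$ the portion of $u'$ already consumed by $T$, $\delta^\star(q_0,p') = q$, and $\lambda^\star(q_0,p') = p$; this is a routine induction on $\length{u'}$ using the definitions of $\delta^\star,\lambda^\star$ and of ``consume''. Taking the limit over longer and longer prefixes $u'$ of $u$ gives $T'(u) = \lambda^\star(q_0,u) = T(u)$, using that $T(u)$ being defined means the consumption makes unbounded progress through $u$. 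The main obstacle I expect is \emph{not} the construction itself but the bookkeeping needed to make $T'$ a genuine FST in the sense of Definition~\ref{def:fst} — in particular making sure that each input letter corresponds to exactly one transition (handled by folding the entire ``consume'' cascade into one step of $\delta',\lambda'$) and that the state set stays finite (handled by the bound $N$ on buffer length, which in turn requires the small argument above that an overfull buffer only arises when $T$ is stuck, a case we are permitted to ignore since then $T(u)$ is undefined).
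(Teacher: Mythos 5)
Your construction is correct and essentially the same as the paper's: both store a buffer of pending input of length at most $N=\max\{\length{u_1u_2}\where\tup{q,u_1,u_2}\in D\}$ in the finite control of $T'$ and fire the (unique, by the determinism condition on $D$) applicable transition of $T$ once enough lookahead has accumulated. The only difference is bookkeeping --- the paper fires at most one $T$-transition per input letter and only when the buffer is exactly full, while you eagerly cascade all applicable firings --- and both variants agree with $T$ on every $u$ for which $T(u)$ is defined.
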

\begin{proof}
  Let $T = \tup{Q,q_0,D,\delta,\lambda}$ be an \lfst. 
  We define an FST $T' = \tup{Q',q'_0,\delta',\lambda'}$ as follows.
  Define $\ell = \max \{\length{u_1u_2} \where \myex{q}{\tup{q,u_1,u_2} \in D} \}$.
  We choose $Q' = Q \times \two^{\le \ell}$ and $q'_0 = \pair{q_0}{\emptyword}$ 
  where $\two^{\le \ell}$ are all words of length at most~$\ell$.
  For every $q\in Q$ and $a \in \two$ we define:
  \begin{align*}
    \pair{\delta'}{\lambda'}(\pair{q}{v},a) &= \pair{\pair{q}{va}}{\emptyword} && \text{$v \in \two^*$ with $\length{v} < \ell$}\\
    \pair{\delta'}{\lambda'}(\pair{q}{u_1u_2v},a) &= \pair{\pair{q}{u_2va}}{\lambda(q,u_1,u_2)} && \text{$u_1,u_2,v \in \two^*$, $\tup{q,u_1,u_2} \in D$}\\[-.5ex]
     &&& \text{with $\length{u_1u_2v} = \ell$}
  \end{align*}
  To make the FST $T'$ complete, we let $\pair{\delta'}{\lambda'}(\pair{q}{v},a) = \pair{\pair{q}{v}}{\emptyword}$ whenever $v \in \two^*$ and $\length{v} = \ell$
  and neither of the two clauses above applies.
  It is straightforward to verify that 
  for all $u \in \two^\infty$, 
  $T'(u) = T(u)$ whenever $T(u)$ is defined.
  \qed
\end{proof}

\begin{theorem}\label{thm:transducts}
  Let $f : \nat \to \nat$ be spiralling, and $\sigma \in \str{\two}$.
  Then $\seq{f} \fstred \sigma$
  if and only if 
  $\sigma \fstconv \seq{\wprod{\vec{\alpha}}{\shift{n_0}{f}}}$
  for some integer~$n_0 \ge 0$, and a tuple of weights $\vec{\alpha}$.
\end{theorem}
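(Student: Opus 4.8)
The plan is to prove the two implications separately: the right-to-left one is a short chaining of earlier lemmas, and the left-to-right one combines the normal form of Lemma~\ref{lem:disambiguate} with a forward and a backward FST construction. For ``$\Leftarrow$'', suppose $\sigma \fstconv \seq{\wprod{\vec{\alpha}}{\shift{n_0}{f}}}$. By Lemma~\ref{lem:basic}\ref{item:lem:basic:xshift} we have $\seq{f} \fstconv \seq{\shift{n_0}{f}}$, and since (by the standing convention) $\wprod{\vec{\alpha}}{\shift{n_0}{f}}$ is natural, Lemma~\ref{lem:wprod:FST} gives $\seq{\shift{n_0}{f}} \fstred \seq{\wprod{\vec{\alpha}}{\shift{n_0}{f}}}$; chaining with $\seq{\wprod{\vec{\alpha}}{\shift{n_0}{f}}} \fstred \sigma$ and using transitivity of $\fstred$ yields $\seq{f} \fstred \sigma$. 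For ``$\Rightarrow$'', assume $\seq{f} \fstred \sigma$. If $\sigma \in \botdeg$, take $n_0 = 0$ and $\vec{\alpha} = \tup{\tup{0,1}}$, so that $\wprod{\vec{\alpha}}{f}$ is constantly $1$ and $\seq{\wprod{\vec{\alpha}}{f}} = (10)^\omega \in \botdeg$; then $\sigma \fstconv \seq{\wprod{\vec{\alpha}}{f}}$, as all members of $\botdeg$ are equivalent.

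Now assume $\sigma \not\in \botdeg$ and apply Lemma~\ref{lem:disambiguate}, obtaining $n_0, m > 0$, a word $w$, a tuple $\vec{\alpha}$ of non-constant weights, and word tuples $\vec{p}, \vec{c}$ with all $c_j \ne \emptyword$, no transition ambiguities (condition~\ref{item:no:ambiguities}), and $\sigma = w \cdot \dpf{\shift{n_0}{f},\vec{\alpha},\vec{p},\vec{c}}$. Writing $g = \wprod{\vec{\alpha}}{\shift{n_0}{f}}$, we have $\seq{g} = \dpf{\shift{n_0}{f},\vec{\alpha},\tup{1,\ldots,1},\tup{0,\ldots,0}}$, so it suffices to show $\sigma \fstconv \seq{g}$. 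The direction $\seq{g} \fstred \sigma$ is immediate: the FST of Figure~\ref{fig:easy} rewrites each block $1 0^{\cyc{i}{j}}$ into $p_j c_j^{\cyc{i}{j}}$, and one more FST prepends $w$.

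For $\sigma \fstred \seq{g}$ we first move to an equivalent instance in which all block exponents are large. By Lemma~\ref{lem:wprod:preserve:spiralling} the function $g$ is spiralling, so for any prescribed constant $L$ there is $N$ with $g(n) \ge L$ for all $n \ge N$. Choosing $K$ with $mK \ge N$ and applying Lemma~\ref{lem:doubleprod:rotation} $mK$ times peels a finite prefix $\pi$ off the double product and leaves $\dpf{\shift{n_1}{f},\vec{\alpha},\vec{p},\vec{c}}$ with $n_1 = n_0 + K\sum_{j<m}(\length{\alpha_j}-1)$; by Lemma~\ref{lem:wprod:nth} its $(i,j)$-th block exponent is $g(m(i+K)+j) \ge L$, and $\wprod{\vec{\alpha}}{\shift{n_1}{f}} = \shift{mK}{g}$, so $\seq{\wprod{\vec{\alpha}}{\shift{n_1}{f}}} \fstconv \seq{g}$ by Lemma~\ref{lem:basic}\ref{item:lem:basic:xshift}. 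After deleting the finite prefix $w\pi$ by a shift, it remains to transduce $\dpf{\shift{n_1}{f},\vec{\alpha},\vec{p},\vec{c}}$ into $\dpf{\shift{n_1}{f},\vec{\alpha},\tup{1,\ldots,1},\tup{0,\ldots,0}}$, i.e., to replace every displayed $p_j$ by $1$ and every displayed $c_j$ by $0$. This is done by an FST with look-ahead (Definition~\ref{def:LFST}), converted to an ordinary FST by Lemma~\ref{lem:LFST2FST}, whose state tracks the current cyclic index $j \in \nat_{<m}$: it consumes $p_j$ emitting $1$, then repeatedly consumes one copy of $c_j$ emitting $0$, until a bounded look-ahead reveals that the input has passed to the next block $p_{j+1}c_{j+1}^{\ast}$, whereupon the state advances to $j+1$. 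Since $c_j^\omega \ne p_{j+1}c_{j+1}^\omega$ for every $j$, these infinite words first differ at some finite position, and taking $L$ larger than all these positions (and than all $\length{c_j}$ and $\length{p_{j+1}}$) yields a single look-ahead length $\ell$ that always suffices. Composing the transductions gives $\sigma \fstred \seq{\wprod{\vec{\alpha}}{\shift{n_1}{f}}} \fstred \seq{g}$, and with $\seq{g} \fstred \sigma$ we obtain $\sigma \fstconv \seq{g} = \seq{\wprod{\vec{\alpha}}{\shift{n_0}{f}}}$, as required.

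I expect the main difficulty to be exactly this look-ahead argument: making precise how large $L$ (the lower bound on exponents) and $\ell$ (the look-ahead) must be chosen so that the no-ambiguity condition $c_j^\omega \ne p_{j+1}c_{j+1}^\omega$ forces a unique reading of the next $\ell$ input letters as either ``one more copy of $c_j$'' or ``the beginning of $p_{j+1}c_{j+1}^{\ast}$'', in the presence of arbitrary overlaps among $c_j$, $p_{j+1}$ and $c_{j+1}$, and verifying that the resulting look-ahead transducer indeed realises the required block-by-block rewriting. The remaining ingredients — the ``$\Leftarrow$'' direction, the $\botdeg$ case, and the prefix-absorption reduction via Lemma~\ref{lem:doubleprod:rotation} — are routine applications of the lemmas established above.
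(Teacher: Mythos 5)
Your proposal is correct and follows essentially the same route as the paper's proof: the $\Leftarrow$ direction via Lemma~\ref{lem:wprod:FST}, the $\botdeg$ case handled by a constant weight, and for $\sigma\not\in\botdeg$ the normal form of Lemma~\ref{lem:disambiguate}, a shift to a suffix with large exponents (which the paper takes directly rather than via repeated applications of Lemma~\ref{lem:doubleprod:rotation}), the FST of Figure~\ref{fig:easy} for $\seq{g}\fstred\sigma$, and a look-ahead transducer, converted by Lemma~\ref{lem:LFST2FST}, for the converse. The look-ahead bookkeeping you flag as the main difficulty is exactly what the paper does with the first-difference positions $t_j$ and the bounds $\ell_j,\ell'_j$, so your choice of $L$ and $\ell$ matches its argument.
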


\begin{proof}
  One direction is by Lemma~\ref{lem:wprod:FST}.
  For the other, 
  assume $\seq{f} \fstred \sigma$.
  If $\sigma \in \botdeg$, then $\sigma \fstconv \seq{\wprod{\tup{\tup{0,0}}}{\shift{0}{f}}} = \seq{n \mapsto 0} = 1^\omega$.
  Thus let $\sigma \not\in \botdeg$.
  By Lemma~\ref{lem:disambiguate} there exist
  $n_1, m \in \nat$, $w \in \two^*$, $\vec{\alpha}$, $\vec{p}$ and $\vec{c}$ 
  with $\length{\vec{\alpha}} = \length{\vec{p}} = \length{\vec{c}} = m > 0$
  such that $\sigma = w \cdot \dpf{\shift{n_1}{f},\vec{\alpha},\vec{p},\vec{c}}$, and
  fulfilling the conditions~\ref{item:no:ambiguities} and~\ref{item:no:empty:cycles} 
  of Lemma~\ref{lem:disambiguate}.
  We abbreviate $g = \wprod{\vec{\alpha}}{\shift{n_1}{f}}$.
  We will show that $\sigma \fstconv \seq{g}$.
  By Lemma~\ref{lem:wprod:preserve:spiralling}
  we have that the function~$g$ is spiralling too.

  By conditions~\ref{item:no:ambiguities} and~\ref{item:no:empty:cycles},
  for every $j \in \nat_{<m}$,
  there exists $t_j \in \nat$ such that $c_j^\omega(t_j) \ne (p_{j+1}c_{j+1}^\omega)(t_j)$ 
  (where addition is modulo $m$);
  let $t_j$ be minimal with this property.
  For $j \in \nat_{<m}$, let $\ell_j,\ell_j' \in \nat$ be minimal such that $\length{c_j c_j^{\ell_j}} > t_j$
  and $\length{p_{j+1} c_{j+1}^{\ell_j'}} > t_j$.
  Then by minimality of $t_j$ and $\ell_j$, 
  we obtain
  \begin{enumerate}
    \item \label{item:thm:transducts:i}
      $c_j c_j^{\ell_j-1} \prefixof p_{j+1} c_{j+1}^{\ell'_j} \tau$ for every $\tau \in \str{\two}$, and
    \item 
      $c_j c_j^{\ell_j} \not\prefixof p_{j+1} c_{j+1}^{\ell'_j} \tau$ for every $\tau \in \str{\two}$
  \end{enumerate}
  (with again addition 
  computed modulo~$m$).
  From \ref{item:thm:transducts:i} we moreover obtain
  \begin{enumerate}[resume]
    \item \label{item:thm:transducts:iii}
      $c_j c_j^{\ell_j} \prefixof c_j^{n} p_{j+1} c_{j+1}^{\ell'_j} \tau$ for every $n > 0$ and $\tau \in \str{\two}$.
  \end{enumerate}

  Next, we take a suffix $\sigma'$ of $\sigma$ such that every occurrence of a block $p_{j+1} c_{j+1}^{\cyc{i}{j}}$
  has as a prefix $p_{j+1} c_{j+1}^{\ell_j}$.
  Let $n_2 \in \nat$ be such that for $g' = \shift{n_2 \cdot m}{g}$
  we have that $g'(n) > \max\{t_j \where j \in \nat_{<m}\}$ for all $n \in \nat$;
  the existence of such an $n_2$ follows from $g$ being spiralling.
  To prove
  $\sigma \fstconv \seq{g}$
  it suffices to show
  $\sigma' \fstconv \seq{g'}$
  where
  \begin{align*}
    \sigma' &= \prod_{i = n_2}^{\infty} \prod_{j=0}^{m-1} p_j c_j^{\cyc{i}{j}} 
     = \prod_{i = 0}^{\infty} \prod_{j=0}^{m-1} p_j c_j^{\cycp{i}{j}} 
    &
    \seq{g'} &= \prod_{i = 0}^{\infty} \prod_{j=0}^{m-1} 1 0^{\cycp{i}{j}} 
  \end{align*}
  where $\cyc{i}{j} = g(mi + j)$, and $\cycp{i}{j} = g'(mi + j)$.
  Note that by the choice of $n_2$, we have
  $\cycp{i}{j+1} \ge \ell'_j$ for all $i \in \nat$ and $j \in \nat_{<m}$.

  It is clear how to construct an FST that transduces $\seq{g'}$ to $\sigma'$,
  see Figure~\ref{fig:easy}.
  For $\sigma' \fstred \seq{g'}$, we define a \lfst~$T = \tup{Q,q_{m-1},D,\delta,\lambda}$, as follows,
  and apply Lemma~\ref{lem:LFST2FST}.
  Let $Q = \{q_j \where j \in \nat_{<m}\}$ and $D = \{ \tup{q_j,c_j,c_j^{\ell_j}} \where j \in \nat_{<m}\} \cup \{ \tup{q_j,p_{j+1},c_{j+1}^{\ell'_j}} \where j \in \nat_{<m}\}$,
  and define $\delta,\lambda$ by
  \begin{align*}
    \pair{\delta}{\lambda}(q_j,c_j,c_j^{\ell_j}) = \pair{q_j}{0} &&&
    \pair{\delta}{\lambda}(q_j,p_{j+1},c_{j+1}^{\ell_j'}) = \pair{q_{j+1}}{1} \,.
  \end{align*}
  We now argue that $\sigma' \fstred_T \seq{g'}$.
  This follows from the following facts:
  \begin{enumerate}[label=(\alph*)]
    \item $\lambda^\star(q_j,p_{j+1}c^{\cycp{i}{j+1}} \tau) = 1 \cdot \lambda^\star(q_{j+1}, c^{\cycp{i}{j+1}} \tau)$,
    \item $\lambda^\star(q_j,c_j^{n}p_{j+1}c^{\cycp{i}{j+1}} \tau) = 0 \cdot \lambda^\star(q_{j+1}, c^{\cycp{i}{j+1}} \tau)$
      for all $n > 0$, since by item~\ref{item:thm:transducts:iii} 
      we have $c_j c_j^{\ell_j} \prefixof c_j^{n}p_{j+1}c^{\cycp{i}{j+1}} \tau$.
      \qed
  \end{enumerate}
\end{proof}

\begin{lemma}\label{lem:one:weight}
  Let $f : \nat \to \nat$ be spiralling, and $\sigma \not\in \botdeg$ with $\seq{f} \fstred \sigma$.
  Then we have $\sigma \fstred \seq{\wprod{\tup{\beta}}{\shift{n_0}{f}}}$
  for some integer~$n_0 \ge 0$, and a non-constant weight $\beta$.
\end{lemma}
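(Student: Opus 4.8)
The plan is to start from the characterisation already established in Theorem~\ref{thm:transducts} and then squeeze the tuple of weights it produces down to a single non-constant weight, by splitting that tuple into its ``columns'' via $\zip$ and selecting a subsequence of blocks. Since $\seq{f} \fstred \sigma$ and $f$ is spiralling, Theorem~\ref{thm:transducts} provides an integer $n_0 \ge 0$ and a tuple of weights $\vec{\alpha}$ of some length $m > 0$ with $\sigma \fstconv \seq{\wprod{\vec{\alpha}}{\shift{n_0}{f}}}$. Abbreviating $g = \wprod{\vec{\alpha}}{\shift{n_0}{f}}$ (a natural weighted product, by the standing convention), we then have in particular $\sigma \fstred \seq{g}$, and since $\sigma \not\in \botdeg$ also $\seq{g} \not\in \botdeg$.

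Next I would apply Lemma~\ref{lem:wprod:zip} to the function $\shift{n_0}{f}$ and the tuple $\vec{\alpha}$, obtaining an $m$-tuple of weights $\vec{\beta}$ with $g = \zip_m(g_0,\ldots,g_{m-1})$ where $g_j = \wprod{\tup{\beta_j}}{\shift{n_0}{f}}$ for $j \in \nat_{<m}$; each $g_j$ is again natural because $g$ is. I claim that some $\beta_i$ must be non-constant: if every $\beta_j$ were constant then, by Lemma~\ref{lem:wprod:nth}, each $g_j$ would be a constant function, so $g = \zip_m(g_0,\ldots,g_{m-1})$ would be periodic with period $m$, contradicting $\seq{g} \not\in \botdeg$. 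Fix such an $i$ and put $\beta = \beta_i$.

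Finally I would transduce $\seq{g}$ down to $\seq{g_i} = \seq{\wprod{\tup{\beta}}{\shift{n_0}{f}}}$ using only the basic operations of Lemma~\ref{lem:basic}: from the definition of $\zip_m$ one has $g(mn+i) = g_i(n)$ for all $n$, so an $x$-shift (Lemma~\ref{lem:basic}~(ii)) gives $\seq{g} \fstconv \seq{n \mapsto g(n+i)}$, and selecting every $m$-th block (Lemma~\ref{lem:basic}~(iv)) gives $\seq{n \mapsto g(n+i)} \fstred \seq{n \mapsto g(mn+i)} = \seq{g_i}$. Composing this chain with $\sigma \fstred \seq{g}$ yields $\sigma \fstred \seq{\wprod{\tup{\beta}}{\shift{n_0}{f}}}$ with $\beta$ non-constant, as required. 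I do not expect a real obstacle here; the argument is essentially routine given the earlier lemmas, the only care points being that the column functions $g_j$ inherit naturality from $g$ and that the last step is the composition of an $x$-shift with a subsequence selection (rather than a single application of Lemma~\ref{lem:basic}~(iv)).
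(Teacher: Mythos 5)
Your proposal is correct and follows essentially the same route as the paper: Theorem~\ref{thm:transducts} to get the tuple $\vec{\alpha}$, the zip/column decomposition of Lemma~\ref{lem:wprod:zip} to isolate a single weight $\beta$, non-constancy from $\sigma\not\in\botdeg$ (the paper cites Lemma~\ref{lem:wprod:not:botdeg} for $\vec{\alpha}$ rather than arguing on the $\beta_j$, which is equivalent), and a shift plus block-subsequence selection to land on $\seq{\wprod{\tup{\beta}}{\shift{n_0}{f}}}$.
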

\begin{proof}
  Let $\sigma \not\in \botdeg$ be a transduct of $\seq{f}$. 
  By Theorem~\ref{thm:transducts} we have
  $\sigma \fstconv \seq{\wprod{\vec{\alpha}}{\shift{n_0}{f}}}$
  for some~$n_0 \in \nat$, and an $m$-tuple of weights $\vec{\alpha}$.
  By Lemma~\ref{lem:wprod:not:botdeg}, there exists $i \in \nat_{<m}$ 
  such that $\alpha_i = \tup{a_0,a_1,\ldots,a_{k_i-1},b_i}$ is not constant.
  Let $k_j = \length{\alpha_j}-1$ ($j \in \nat_{<m}$), 
  and for $\ell \in \nat_{\le m}$ let $s_\ell = \sum_{j=0}^{\ell - 1} k_j$.
  We define a weight $\beta$ of length $s_m + 1$ where, for $j \in \nat_{<m}$ and $h \in \nat_{<k_j}$,
  $\beta_{s_j + h} = 0$ if $j \ne i$, and $\beta_{s_j + h} = a_h$ if $j = i$, 
  and $\beta_{s_m} = b_i$.
  Then we have
  $\sigma \fstred \seq{n \mapsto (\wprod{\vec{\alpha}}{\shift{n_0}{f}})(nm + i)} = \seq{\wprod{\tup{\beta}}{\shift{n_0}{f}}}$.
\end{proof}

\begin{theorem}\label{thm:no:minimal:below}
  There is a non-atom, non-zero degree $\convclass{\sigma}$ that has no atom degree below it.
  Hence, non-zero transducts of $\sigma$ start an infinite~descending~chain. 
\end{theorem}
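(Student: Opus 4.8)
The plan is to take $\sigma = \seq{2^n}$ and to establish the stronger statement that \emph{every} transduct $\tau \notin \botdeg$ of $\sigma$ lies in a non-atom degree. Applying this to $\tau = \sigma$ shows that $\convclass{\sigma}$ is itself non-zero (the sequence $\seq{2^n}$ is not ultimately periodic) and non-atom; applying it to an arbitrary such $\tau$ shows that no atom lies below $\convclass{\sigma}$; and since the degree witnessing that $\convclass{\tau}$ is non-atom is again realised by a transduct of $\sigma$, iterating produces the asserted infinite descending chain of non-zero transducts of $\sigma$. (Any $\seq{c^n}$ with $c \ge 2$ would serve equally well.)

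First note that $f(n) = 2^n$ is spiralling: $\lim_{n} 2^n = \infty$, and for every $m \ge 1$ the map $n \mapsto 2^n \bmod m$ is ultimately periodic, since a repetition among $2^0,\dots,2^m \bmod m$ propagates under multiplication by $2$. Now fix $\tau \notin \botdeg$ with $\seq{f} \fstred \tau$. By Lemma~\ref{lem:one:weight} there are $n_0 \in \nat$ and a non-constant weight $\beta = \tup{a_0,\dots,a_{k-1},b}$ (so $k \ge 1$ and $a_j > 0$ for some $j$) with $\tau \fstred \seq{g}$ where $g = \wprod{\tup{\beta}}{\shift{n_0}{f}}$. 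Unfolding Lemma~\ref{lem:wprod:nth} for the single weight $\beta$ (where the shift parameter is $t = nk$) yields
\[
  g(n) \;=\; \sum_{j=0}^{k-1} a_j\, 2^{n_0 + nk + j} + b \;=\; C \cdot 2^{nk} + b, \qquad C = 2^{n_0}\sum_{j=0}^{k-1} a_j 2^j > 0 .
\]
Consequently $g$ is spiralling (Lemma~\ref{lem:wprod:preserve:spiralling}, using that $\beta$ is non-constant), and since $k \ge 1$ and $C > 0$ the function $g$ is eventually strictly increasing, so $\seq{g} \notin \botdeg$.

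Next I would place a non-$\botdeg$ degree strictly below $\convclass{\seq{g}}$. Selecting every second block (Lemma~\ref{lem:basic}~\ref{item:lem:basic:sub}) gives $\seq{g} \fstred \seq{h}$ where $h(n) = g(2n) = C \cdot 2^{2nk} + b$; this $h$ is not ultimately periodic, and it is spiralling, being the restriction of the spiralling $g$ to the arithmetic progression $n \mapsto 2n$. Moreover $g \in o(h)$, since $h(n)/g(n) \to \infty$, so Lemma~\ref{lem:too:fast} gives $\seq{h} \not\fstred \seq{g}$, whence $\convclass{\seq{g}} \fstredstrict \convclass{\seq{h}} \fstredstrict \botdeg$ and $\convclass{\seq{g}}$ is a non-atom degree. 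Now split on whether $\seq{g} \fstred \tau$ holds: if it does, then $\convclass{\tau} = \convclass{\seq{g}}$ is non-atom; if it does not, then $\convclass{\tau} \fstredstrict \convclass{\seq{g}} \fstredstrict \botdeg$, so $\convclass{\tau}$ is non-atom as well. In both cases $\convclass{\tau}$ has a non-$\botdeg$ degree strictly below it; picking a representative $\tau_1$ of such a degree, $\tau_1 \notin \botdeg$ and $\sigma \fstred \tau \fstred \tau_1$ by transitivity, so $\tau_1$ is again a non-$\botdeg$ transduct of $\sigma$ and the argument repeats, producing $\convclass{\tau} \fstredstrict \convclass{\tau_1} \fstredstrict \convclass{\tau_2} \fstredstrict \cdots$.

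I expect the only real subtlety to be organisational. It is important to invoke Lemma~\ref{lem:one:weight} rather than Theorem~\ref{thm:transducts} directly, so that the weighted product that appears uses a \emph{single} non-constant weight and is therefore spiralling by Lemma~\ref{lem:wprod:preserve:spiralling}; for a mixed tuple of weights the product need not be spiralling, and then Lemma~\ref{lem:too:fast} could not be applied. Apart from that, one only unfolds the weighted product on the exponential $n \mapsto 2^n$ and treats the degenerate case $\tau \fstconv \seq{g}$ on the same footing as $\tau \fstredstrict \seq{g}$. A cosmetic alternative, not needed, is to first simplify $\seq{g} = \seq{C 2^{nk} + b} \fstconv \seq{(2^k)^n}$ using the scalar-multiplication and shift clauses of Lemma~\ref{lem:basic}, and then argue with the pure exponential $\seq{(2^k)^n}$.
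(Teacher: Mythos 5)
Your proposal is correct and follows essentially the same route as the paper: take $\seq{2^n}$, apply Lemma~\ref{lem:one:weight} to reduce any non-zero transduct to $\seq{C\cdot 2^{nk}+b}$, then use block selection (Lemma~\ref{lem:basic}~\ref{item:lem:basic:sub}) together with Lemma~\ref{lem:too:fast} to exhibit a non-zero degree strictly below it. The only differences are cosmetic: the paper first normalises $\seq{g}\fstconv\seq{2^{nk}}$ before doubling the exponent, whereas you compare $g$ directly with $g(2n)$, and you spell out the case split and the iteration of the descending chain that the paper leaves implicit.
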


\begin{proof}
  We define the function $f : \nat \to \nat$ by $f(n) = 2^n$.
  We show that the degree $\convclass{\seq{f}}$ has no atom degree below it.
  Let $\sigma\not\in\botdeg$ with $\seq{f} \fstred \sigma$. 
  By Lemma~\ref{lem:one:weight} there is a non-constant weight $\beta = \tup{a_0,a_1,\ldots,a_{k-1},b}$
  such that $\sigma \fstred \seq{g}$ where 
  $g = \wprod{\tup{\beta}}{\shift{n_0}{f}}$.
  Since $f(n) = 2^n$ it follows that 
  \begin{align*}
    g(n) = b+\sum_{i =0}^{k-1} a_i 2^{n_0+nk+i} = b + 2^{nk}\sum_{i =0}^{k-1} a_i 2^{n_0 +i} =  (g(0)-b) \cdot 2^{nk} + b \,.
  \end{align*}
  By Lemma~\ref{lem:wprod:FST} we have that $\seq{g} = \seq{(g(0)-b) \cdot 2^{nk} + b} \fstconv \seq{2^{nk}}$.
  Thus we have $\sigma \fstred \seq{2^{nk}}$.
  Also $\seq{2^{nk}} \fstred \seq{2^{2nk}}$
  holds by Lemma~\ref{lem:basic}~\ref{item:lem:basic:sub},
  and by Lemma~\ref{lem:too:fast} we conclude $\seq{2^{2nk}} \not\fstred \seq{2^{nk}}$.
  \qed
\end{proof}

\section{Squares}\label{sec:squares}
In~\cite{endr:hend:klop:2011} it is shown that $\convclass{\seq{n}}$ is an atom degree.
One of the main questions of~\cite{endr:hend:klop:2011} is whether there exist other atom degrees. 
Here we show that also $\convclass{\seq{n^2}}$ is an atom degree.
The main tool is Theorem~\ref{thm:transducts}, 
the characterisation of transducts of spiralling sequences,
which implies the following proposition.

\begin{proposition}\label{prop:poly-transducts}
  Let $p(n)$ be a polynomial of degree $k$ with non-negative integer coefficients,
  and let $\sigma$ be a transduct of $\seq{p(n)}$ with $\sigma\notin\botdeg$.
  Then $\sigma \fstred \seq{q(n)}$
  for some polynomial $q(n)$ of degree~$k$
  with non-negative integer coefficients.
\end{proposition}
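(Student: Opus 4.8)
The plan is to apply Theorem~\ref{thm:transducts} directly. Since $p(n)$ is a polynomial with non-negative integer coefficients, it is spiralling (as noted after Definition~\ref{def:spiralling}, polynomials are spiralling, and here all values lie in $\nat$), so the theorem applies to $\seq{p(n)}$. Given $\sigma \notin \botdeg$ with $\seq{p(n)} \fstred \sigma$, Theorem~\ref{thm:transducts} yields an integer $n_0 \ge 0$ and a tuple of weights $\vec{\alpha}$ with $\sigma \fstconv \seq{\wprod{\vec{\alpha}}{\shift{n_0}{p}}}$. Actually it is cleaner to invoke Lemma~\ref{lem:one:weight}: since $\sigma \notin \botdeg$, we get $\sigma \fstred \seq{\wprod{\tup{\beta}}{\shift{n_0}{p}}}$ for a single non-constant weight $\beta = \tup{a_0,\ldots,a_{k'-1},b}$ with $a_i \in \rat_{\ge 0}$ and $b \in \rat$. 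So it remains to identify the function $g = \wprod{\tup{\beta}}{\shift{n_0}{p}}$ and show its degree contains some $\seq{q(n)}$ with $q$ a polynomial of degree exactly $k$ and non-negative integer coefficients.

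The key computation is to unfold $g$ using Lemma~\ref{lem:wprod:nth}: for a one-element tuple $\tup{\beta}$ of length $k'{+}1$, one has $g(n) = \sum_{j=0}^{k'-1} a_j \cdot p(n_0 + k'n + j) + b$. Since $p$ is a polynomial of degree $k$, each summand $a_j\, p(n_0 + k'n + j)$ is (as a function of $n$) a polynomial in $n$ of degree $\le k$, and the leading coefficient of $p(n_0 + k'n + j)$ in $n$ is $(\text{lead. coeff. of } p)\cdot (k')^k$, which does not depend on $j$. Hence the degree-$k$ coefficient of $g$ is $\bigl(\sum_{j=0}^{k'-1} a_j\bigr)\cdot(\text{lead. coeff. of }p)\cdot(k')^k$, and since $\beta$ is non-constant this sum $\sum_j a_j$ is strictly positive, so $g$ has degree exactly $k$. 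Thus $g(n) = r(n)$ for a polynomial $r$ of degree $k$ with rational coefficients and positive leading coefficient. I would then use Lemma~\ref{lem:basic}: clearing denominators by multiplying through by a suitable positive integer $a$ (parts~\ref{item:lem:basic:mult}) and shifting the argument/value (parts~\ref{item:lem:basic:xshift}, \ref{item:lem:basic:yshift}) preserves the degree of $\seq{\cdot}$; I need to arrange that the resulting polynomial $q(n) = a\cdot r(n + c) + d$ has \emph{non-negative integer} coefficients. Since $g$ takes values in $\nat$ for all $n$ (the weighted product is natural), $r$ is $\nat$-valued on $\nat$, and standard facts about integer-valued polynomials give that $a\cdot r$ has integer coefficients for a suitable $a$; a further $x$-shift by a large constant $c$ makes all coefficients non-negative because the leading coefficient is positive (one can, e.g., expand around a large enough point, or simply note that for $c$ large every coefficient of $a\cdot r(n+c)$ is dominated by the contribution of the positive leading term). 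The $y$-shift absorbs any remaining negative constant term.

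Putting it together: $\sigma \fstred \seq{g} = \seq{r(n)} \fstconv \seq{a\, r(n+c) + d} = \seq{q(n)}$ with $q$ of degree $k$ and non-negative integer coefficients, which is exactly the claim (transitivity of $\fstred$ gives $\sigma \fstred \seq{q(n)}$).

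The main obstacle I expect is the bookkeeping in the last paragraph: turning the $\nat$-valued degree-$k$ polynomial $r$ with rational coefficients into an honest polynomial with non-negative integer coefficients of the same degree, while justifying at each step that the transformation stays inside the degree via Lemma~\ref{lem:basic}. None of this is deep, but one must be a little careful that ``multiply by $a$'' followed by ``shift the argument'' followed by ``shift the value'' genuinely lands on non-negative integer coefficients and does not accidentally drop the degree. A clean way is: pick $a$ so that $a\cdot r \in \zz[n]$ (possible since $a\cdot r$ is integer-valued for suitable $a$, e.g.\ $a = k!$ times a denominator), then choose $c \in \nat$ large enough that all coefficients of $a\cdot r(n+c)$ are positive (here one uses that the leading coefficient is positive, so shifting pushes the whole Newton-type expansion to be positive), and finally, if the constant term is still negative, it cannot be, since the polynomial is $\nat$-valued and hence $\ge 0$ at $n=0$ — so in fact no $y$-shift is needed, but including it costs nothing. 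I would present this succinctly rather than belaboring the elementary polynomial algebra.
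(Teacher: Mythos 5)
Your proposal is correct and follows the same route as the paper: invoke Lemma~\ref{lem:one:weight} to reduce to a single non-constant weight, unfold the weighted product via Lemma~\ref{lem:wprod:nth} to get $b+\sum_j a_j\,p(n_0+k'n+j)$, and observe that the non-negativity of the $a_j$ forces degree exactly $k$. The paper dismisses the remaining step with ``can easily be recognised to be a polynomial of degree $k$''; your extra care about clearing denominators and ensuring non-negative integer coefficients (via Lemma~\ref{lem:basic}) fills in exactly what the paper leaves implicit, though it can be shortened by noting that all non-constant coefficients are already non-negative rationals and the constant term equals $g(0)\in\nat$, so only a multiplication by the common denominator is needed.
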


\begin{proof}
  By Lemma~\ref{lem:one:weight} it follows that
  $\sigma \fstred \seq{\wprod{\tup{\alpha}}{\shift{n_0}{p}}}$
  for some integer~$n_0 \ge 0$, and a non-constant weight $\alpha = \tup{a_{0},\ldots,a_{k-1},b}$.
  Let $h \in \nat_{<k}$ be such that  $a_{h} \neq 0$.
  Then we find
  $
    (\wprod{\tup{\alpha}}{\shift{n_0}{p}})(n)
      =
    b + \sum_{j=0}^{k-1}  a_{j} \cdot p(n_0 + nk + j)
  $, 
  which 
  can easily 
  be recognised to be a polynomial $q(n)$ of degree~$k$.
  \qed
\end{proof}

\begin{theorem}\label{thm:squares}
  The degree $\convclass{\seq{n^2}}$ is an atom.
\end{theorem}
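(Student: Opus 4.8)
The plan is to deduce atomicity of $\convclass{\seq{n^2}}$ from Proposition~\ref{prop:poly-transducts}. First, $\seq{n^2}\notin\botdeg$, since $n\mapsto n^2$ is not ultimately periodic. Now let $\sigma$ be an arbitrary transduct of $\seq{n^2}$ with $\sigma\notin\botdeg$; it suffices to prove $\sigma\fstred\seq{n^2}$, for then $\sigma\fstconv\seq{n^2}$ (using $\seq{n^2}\fstred\sigma$), and hence no degree can lie strictly between $\convclass{\seq{n^2}}$ and $\botdeg$: any such intermediate degree would be $\convclass{\tau}$ for a non-$\botdeg$ transduct $\tau$ of $\seq{n^2}$ with $\tau\not\fstred\seq{n^2}$, contradicting what we are about to show. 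By Proposition~\ref{prop:poly-transducts} with $p(n)=n^2$ and $k=2$ we obtain $\sigma\fstred\seq{q(n)}$ for some polynomial $q(n)=an^2+bn+c$ of degree~$2$ with $a\ge 1$ and $b,c\in\nat$. So the whole theorem comes down to the key reduction
\[
  \seq{an^2+bn+c}\fstred\seq{n^2}\qquad(a\ge 1,\ b,c\in\nat).
\]

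To prove the key reduction I would complete the square. By Lemma~\ref{lem:basic}\ref{item:lem:basic:mult} we have $\seq{an^2+bn+c}\fstconv\seq{4a(an^2+bn+c)}$, and $4a(an^2+bn+c)=(2an+b)^2+(4ac-b^2)$. The additive constant $4ac-b^2$ occurs in every block, and it can be removed by combining Lemma~\ref{lem:basic}\ref{item:lem:basic:xshift} and~\ref{item:lem:basic:yshift} (pass to a tail of the sequence on which every block length exceeds $|4ac-b^2|$), so $\seq{an^2+bn+c}\fstconv\seq{(2an+b)^2}$. When $2a\mid b$, say $b=2ab'$, this finishes the job at once: $(2an+b)^2=(2a)^2(n+b')^2$, hence $\seq{(2an+b)^2}\fstconv\seq{(n+b')^2}\fstconv\seq{n^2}$ by Lemma~\ref{lem:basic}\ref{item:lem:basic:mult} and~\ref{item:lem:basic:xshift}.

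The main obstacle is the remaining case, $2a\nmid b$: one must transduce the sequence of squares along an arithmetic progression back to the sequence of all squares, and since a transducer reading one block $10^{m^2}$ can only distort its length by an affine map (Lemma~\ref{lem:pumping}), the missing intermediate squares cannot be inserted naively block-by-block, so a global rearrangement is required. Here I would work on the side of Theorem~\ref{thm:transducts}, writing $h(n)=(2an+b)^2$: by Lemma~\ref{lem:wprod:zip}, up to $\fstconv$ every transduct of $\seq{h}$ is a $\zip$ of sequences $\seq{g_i}$ with each $g_i=\wprod{\tup{\beta_i}}{\shift{m_0}{h}}$ a quadratic; the task is to choose the offset $m_0$ and the weights $\beta_i$ so that this $\zip$, after resolving transition ambiguities via Lemma~\ref{lem:disambiguate}, equals $\seq{n^2}$, realising the back-transduction with the help of a look-ahead transducer (Definition~\ref{def:LFST}, Lemma~\ref{lem:LFST2FST}). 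Making the quadratic, linear and constant parts of the blocks in each $\zip$-class match the target block lengths $N^2$ is the technical core of the proof, and it is precisely this step that is currently carried through only for $k=2$; for general $k$ it reduces, as remarked in the introduction, to showing $\seq{p(n)}\fstred\seq{n^k}$ for degree-$k$ polynomials $p$ with non-negative integer coefficients.
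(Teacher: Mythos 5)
Your reduction of the theorem to the key claim $\seq{an^2+bn+c}\fstred\seq{n^2}$ is correct, and both the completing-the-square step and the case $2a\mid b$ are fine. But there is a genuine gap exactly where you place the ``technical core'': in the case $2a\nmid b$ you never actually produce the back-transduction. Your final paragraph only restates the problem --- ``choose the offset $m_0$ and the weights $\beta_i$ so that this $\zip$ \dots equals $\seq{n^2}$'' --- without exhibiting such a choice or proving that one exists; since the existence of these weights is precisely the content of the theorem, the argument is circular at this point. (Note also that Lemma~\ref{lem:wprod:zip} merely decomposes a \emph{given} weighted product as a $\zip$; it does not help you construct a weighted product of $h(n)=(2an+b)^2$ whose values are the squares.) Completing the square does not in fact reduce the difficulty: $\seq{(2an+b)^2}\fstred\seq{n^2}$ with $2a\nmid b$ is essentially the problem you started from.

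What is needed here, and what the paper supplies, is one concrete weighted merge via Lemma~\ref{lem:basic}~\ref{item:lem:basic:merge}. First enforce $2a\ge b$ by the subsequence operation \ref{item:lem:basic:sub} with a suitable $d$ (replacing $(a,b)$ by $(ad^2,bd)$), then normalise by $x$- and $y$-shifts to $f(n)=an^2+(2a+b)n$, and form $\seq{b\,f(2n)+(2a-b)\,f(2n+1)}$. The two non-negative weights $b$ and $2a-b$ sum to $2a$, and a short computation shows the result is $8a^2n^2+16a^2n+\text{const}$, i.e.\ $8a^2(n+1)^2$ up to an additive constant, whence $\fstconv\seq{n^2}$ by \ref{item:lem:basic:yshift}, \ref{item:lem:basic:mult} and \ref{item:lem:basic:xshift}. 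Some such explicit choice of weights (or an equivalent computation) is what your proof still has to provide; without it the case $2a\nmid b$, which is the heart of the theorem, remains open.
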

\begin{proof}
  Let $\sigma \in \str{\two}$ be a transduct of $\seq{n^2}$
  such that $\sigma$ is not ultimately periodic.
  By Proposition~\ref{prop:poly-transducts} 
  there are integers $a > 0$, $b,c \ge 0$ such that $\sigma \fstred \seq{an^2 + bn + c}$.
  We first assume $2a \ge b$.
  Abbreviate $f(n) = a n^2 + (2a+b)n$.
  The roman numbers below refer to Lemma~\ref{lem:basic}.
  We derive
  \begin{align*}
    \seq{a n^2 + bn + c}  
    & \fstconv \seq{a n^2 + bn}
    && \text{by \ref{item:lem:basic:yshift}} \\
    & \fstconv \seq{a (n+1)^2 + b(n+1)}
    && \text{by \ref{item:lem:basic:xshift}} \\
    & \fstconv \seq{f(n)}
    && \text{by \ref{item:lem:basic:yshift}} \\
    & \fstred \seq{b(f(2n)) + (2a-b)(f(2n+1))}
    && \text{by \ref{item:lem:basic:merge}} \\
    & \fstconv \seq{8 a^2 n^2 + 16 a^2 n} \fstconv \seq{8 a^2 (n+1)^2}
    && \text{by \ref{item:lem:basic:yshift}} \\
    & \fstconv \seq{(n+1)^2} 
    && \text{by \ref{item:lem:basic:mult}} \\
    & \fstconv \seq{n^2} 
    && \text{by \ref{item:lem:basic:xshift}} \,.
  \end{align*}
  If $2a < b$, choose $d$ such that $2ad \ge b$. 
  Then we have $\seq{an^2 + bn} \fstred \seq{ad^2n^2 + bdn}$ by~\ref{item:lem:basic:sub},
  and we reason as above for $\seq{a'n^2 + b'n}$ with $a' = ad^2$ and $b' = bd$.

  This shows that every non-ultimately periodic transduct of $\seq{n^2}$
  can be transduced back to $\seq{n^2}$.
  Hence, the degree of $\seq{n^2}$ is an atom.
  \qed
\end{proof}
%
%

\bibliography{main}

\end{document}